\documentclass[journal,twoside,web]{ieeecolor}
\usepackage{lcsys}
\usepackage{cite}
\usepackage{amsmath,amssymb,amsfonts}
\usepackage{algorithmic}
\usepackage{graphicx}
\usepackage{textcomp}
\def\BibTeX{{\rm B\kern-.05em{\sc i\kern-.025em b}\kern-.08em
    T\kern-.1667em\lower.7ex\hbox{E}\kern-.125emX}}
\markboth{\journalname, VOL. XX, NO. XX, XXXX 2017}
{Author \MakeLowercase{\textit{et al.}}: Preparation of Papers for textsc{IEEE Control Systems
Letters} (November 2021)}

\newboolean{long}
\setboolean{long}{true}
\usepackage{hyperref}
\usepackage{graphicx}		
\usepackage{wrapfig}
\usepackage[format=plain,font=footnotesize,labelfont=bf,labelsep=period]{caption}
\usepackage[export]{adjustbox}
\usepackage{bm}

\usepackage{amsmath}
\usepackage{mathrsfs}
\usepackage{amssymb}
\usepackage{mathtools}

\usepackage{pdfsync}
\usepackage[normalem]{ulem}
\usepackage{paralist}	
\usepackage[space]{grffile} 

\usepackage{color}

\usepackage{amsthm}
\usepackage{centernot}
\usepackage{multicol}

\newtheorem{theorem}{Theorem}
\newtheorem*{theorem*}{Theorem}
\newtheorem{corollary}{Corollary}
\newtheorem{lemma}{Lemma}
\newtheorem*{lemma*}{Lemma}
\newtheorem{definition}{Definition}
\newtheorem{assumption}{Assumption}

\newtheorem{proposition}{Proposition}
\newtheorem*{proposition*}{Proposition}

\newcommand{\R}{\mathbb{R}}

\newcommand{\C}{\mathcal{C}}

\definecolor{darkblue}{RGB}{0,0,102}
\definecolor{lightblue}{RGB}{77,77,148}

\definecolor{gold}{RGB}{234, 170, 0}
\definecolor{metallic_gold}{RGB}{139, 111, 78}

\renewcommand{\cal}[1]{\mathcal{ #1 }}

\newcommand{\mb}[1]{\mathbf{ #1 }}
\newcommand{\bs}[1]{\boldsymbol{ #1 }}

\DeclareMathOperator*{\argmin}{argmin}

\newcommand{\lmat }{\begin{bmatrix}}
\newcommand{\rmat}{\end{bmatrix}}

\newcommand{\mart}{W}


\renewcommand{\P}{\mathbb{P}} 
\newcommand{\E}{\mathbb{E}} 
\newcommand{\F}{\mathscr{F}} 
\usepackage{dsfont}

\usepackage{svg}

\newcommand{\revised}[1]{\textcolor{black}{#1}}

\begin{document}

\title{\LARGE{Bounding Stochastic Safety: Leveraging Freedman's Inequality with Discrete-Time Control Barrier Functions}}
\author{Ryan K. Cosner, Preston Culbertson, and  Aaron D. Ames
\thanks{The authors are with the Department of Mechanical and Civil Engineering at the California Institute of Technology, Pasadena, CA 91125, USA.   
        {\tt\small\{rkcosner, pbulbert, ames\}@caltech.edu} 
        This work was supported by BP and NSF CPS Award \#1932091. }
\ifthenelse{\boolean{long}}{
    }{
        \vspace{-10em}
    }
}

\pagestyle{empty} 
\maketitle
\thispagestyle{empty}

\vspace{-5em}
\begin{abstract}
When deployed in the real world, safe control methods must be robust to unstructured uncertainties such as modeling error and external disturbances. Typical robust safety methods achieve their guarantees by always assuming that the worst-case disturbance will occur. 
In contrast, this paper utilizes Freedman's inequality in the context of discrete-time control barrier functions (DTCBFs) and c-martingales to provide stronger (less conservative) safety guarantees for stochastic systems. Our approach accounts for the underlying disturbance distribution instead of relying exclusively on its worst-case bound and does not require the barrier function to be upper-bounded, which makes the resulting safety probability bounds more useful for intuitive safety constraints such as signed distance. We compare our results with existing safety guarantees, such as input-to-state safety (ISSf) and martingale results that rely on Ville's inequality. When the assumptions for all methods hold, we provide a range of parameters for which our guarantee is stronger. Finally, we present simulation examples, including a bipedal walking robot, that demonstrate the utility and tightness of our safety guarantee.   
\vspace{0em}
\end{abstract}

\begin{IEEEkeywords}
Constrained control, Lyapunov methods, robotics, 
stochastic systems, uncertain systems 
\end{IEEEkeywords}


\ifthenelse{\boolean{long}}{
    }{
        \vspace{-0.3em}
    }
\section{Introduction}

\IEEEPARstart{S}{afety}---typically characterized as the forward-invariance of a safe set \cite{ames_control_2017}---has become a popular area of study within control theory, with broad applications to autonomous vehicles, medical and assistive robotics, aerospace systems, and beyond. Ensuring safety for these systems requires one to account for unpredictable, real-world effects. 
Historically, control theory has treated the problem of safety under uncertainty using deterministic methods, often seeking safety guarantees in the presence of bounded disturbances. This problem has been studied using a variety of safe control approaches including control barrier functions (CBFs) \cite{kolathaya_input--state_2019}, backwards Hamilton-Jacobi (HJ) reachability \cite{bansal_hamilton-jacobi_2017}, and state-constrained model-predictive control (MPC) \cite{borrelli2017predictive}. However, this worst-case analysis often leads to conservative performance since it ensures robustness to adversarial disturbances which are uncommon in practice.

\begin{figure}[t]
    \centering
    \includegraphics[width=\linewidth]{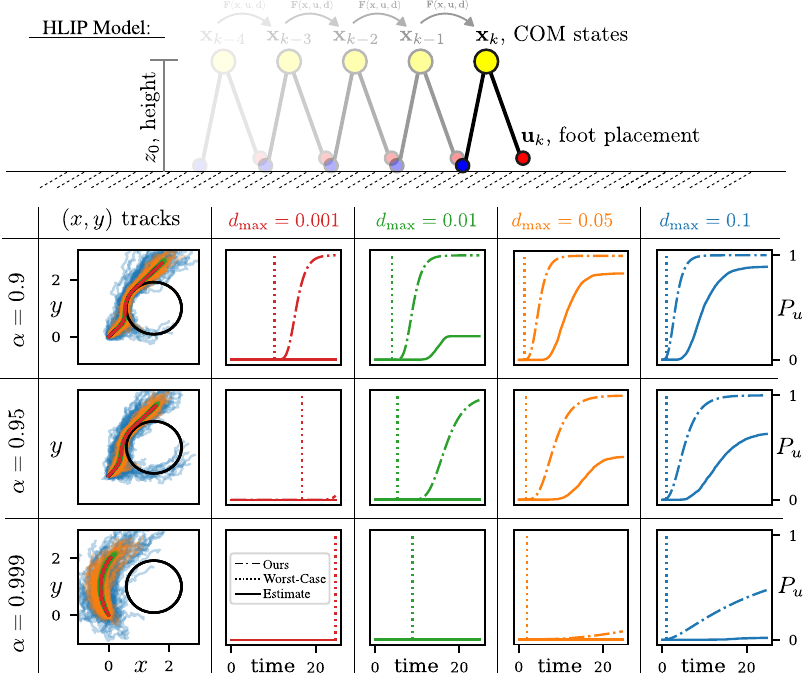}
    \caption{Safety results for a bipedal robot navigating around an obstacle using our method. Details are provided in Section \ref{example:hlip}. \textbf{(Top)} Visualization of the Hybrid Linear Inverted Pendulum (HLIP) model. Yellow indicates the center-of-mass (COM), blue is the stance foot, and red is the swing foot. The states $\mb{x}_k$ are the global COM position, the relative COM position, and COM velocity, and the input is the relative position of the feet at impact. 
    \textbf{(Bottom)} A table with variable maximum disturbance value $(d_\textup{max})$ and controller parameter ($\alpha$) shows our \revised{(dashed lines)} theoretical bound on safety failure from Thm. \ref{thm:main}\revised{, (dotted lines) the shortest first-violation time based on the worst-case disturbance approximation,} and \revised{(solid lines)} approximated probabilities from 5000 trials (lower is \revised{safer}). On the left, the trajectories of the COM are shown \revised{walking from bottom left towards the top right while avoiding the obstacle} with each color corresponding to a different $d_\textup{max}$. The robot attempts to avoid the obstacle \revised{(black)}. Code to reproduce this plot can be found at \cite{codebase}.}
    \label{fig:hlip}
\ifthenelse{\boolean{long}}{
    }{
        \vspace{-2em}
    }

\end{figure}

Stochastic methods provide an alternative to the worst-case bounding approach. Instead of a conservative uncertainty bound, these methods consider a distribution of possible disturbances. Although they do not provide absolute, risk-free safety guarantees, they allow for smooth degradation of safety via variable, risk-aware levels of conservatism. 
\revised{A wide variety of stochastic safety methods exist including: reachability-based optimal safety \cite{abate2008probabilistic, franzle2011measurability}, constrained coherent risk measures \cite{chapman2021risk}, sampling-based general risk measures \cite{lindemann2021stl}, and martingale-based methods \cite{steinhardt_finite-time_2012, santoyo_barrier_2021} amongst many others. In this work, we will focus on martingale-based methods due to their ability to generate trajectory-long guarantees and their relative simplicity as a method which relies primarily on only a distribution's first-moment.}

\revised{Continuous-time martingale-based stochastic safety methods have successfully achieved} strong probabilistic safety guarantees \cite{so2023almostsure, Black_2023, kushner_stochastic_1967}\revised{, \cite{prajna2004stochastic}}, but generally require controllers with functionally infinite bandwidth, a strong assumption for real-world systems with discrete-time sensing and actuation. Alternatively, discrete-time methods have shown success while also capturing the sampled-data complexities of most real-world systems \cite{cosner_robust_2023,santoyo_verification_2019,steinhardt_finite-time_2012, mathiesen2023inner}. In this work we focus on extending the theory \revised{of} discrete-time \revised{martingale-based} stochastic safety involving discrete-time control barrier functions (DTCBFs) and $c$-martingales.



The stochastic \revised{distrete-time martingale-based stochastic safety literature has shown significant theoretical success \cite{santoyo_barrier_2021,mathiesen2023inner, cosner_robust_2023,steinhardt_finite-time_2012,kushner_stochastic_1967} in generating risk-based safety guarantees and in deploying these guarantees to real-world systems \cite{cosner2023generative}}. 
\revised{In this work we seek to extend these} existing martingale-based safety techniques by utilizing a \revised{different (and often stronger)} concentration inequality that can provide sharper safety probability bounds. Where other works 
have traditionally relied on Ville's inequality \cite{ville1939etude}, 
we instead turn to Freedman's inequality \cite{freedman1975tail}. 
By additionally assuming that the martingale \revised{differences} and predictable quadratic variation are bounded, this inequality relaxes the \revised{nonnegativity} assumption required by Ville's \revised{inequality} while also providing generally tighter bounds that degrade smoothly with increasing uncertainty.




This paper combines discrete-time martingale-based safety techniques with Freedman's inequality to obtain tighter bounds on stochastic safety. We make three key contributions: (1) introducing Freedman-based safety probabilities for DTCBFs and $c$-martingales, (2) providing a range of parameter values where our bound is tighter than existing \revised{discrete-time martingale-based safety} results, and (3) validating our method in simulation.  We apply our results to a bipedal obstacle avoidance scenario (Fig. \ref{fig:hlip}), using a reduced-order model of the step-to-step dynamics. 
This case study shows the utility of our probability bounds, which decay smoothly with increasing uncertainty and enable non-conservative, stochastic collision avoidance for bipedal locomotion. 



\ifthenelse{\boolean{long}}{
    }{
        \vspace{-0.0em}
    }

\section{Background}

Let $(\Omega, \F, \P)$ be a probability space and let \revised{$\F_0 \subset \F_1 \subset \dots \subset \F$} be a filtration of $\F$. Consider discrete-time dynamical systems of the form: 
\ifthenelse{\boolean{long}}{
    }{
        \vspace{-0.3em}
    }

\begin{align}
\mb{x}_{k+1} = \mb{F}(\mb{x}_k, \mb{u}_k, \mb{d}_k), \quad \forall k \in \mathbb{Z} \label{eq:ol_dyn}
\end{align}
\ifthenelse{\boolean{long}}{
    }{
        \vspace{-0.3em}
    }
where $\mb{x}_k \in \R^n $ is the state, $\mb{u}_k \in \R^m $ is the input, $\mb{d}_k$ is an $\F_{k+1}$ measurable random disturbance which takes values in $\R^\ell$
, and $\mb{F}: \R^n \times \R^m \times \R^\ell  \to \R^n$ is the dynamics. Throughout this work we assume that all random variables and functions of random variables are integrable. 

To create a closed-loop system, we add a state-feedback controller $\mb{k}: \R^n \to \R^m $: 
\begin{align}
    \mb{x}_{k+1} = \mb{F}(\mb{x}_k, \mb{k}(\mb{x}_k), \mb{d}_k), \quad \forall k \in \mathbb{Z} \label{eq:cl_dyn}
\end{align}

\noindent The goal of this work is to provide probabilisic safety guarantees for this closed-loop system.

\ifthenelse{\boolean{long}}{
    }{
        \vspace{-1em}
    }
\subsection{Safety and Discrete-Time Control Barrier Functions}

To make guarantees regarding the safety of system \eqref{eq:cl_dyn}, we first formalize our notion of safety as the forward invariance of a \revised{user-defined} ``safe set'', $\cal{C}\subset\R^n$, as is common in the robotics and control literature \cite{ames_control_2017,  bansal_hamilton-jacobi_2017,borrelli2017predictive, khatib1986real}. 

\begin{definition}[Forward Invariance and Safety\footnote{\revised{
For this work we will focus on the safety of \eqref{eq:cl_dyn} exclusively at samples times as in \cite{borrelli2017predictive} and \cite{agrawal_discrete_2017}. We refer to \cite{breeden_control_2022} for an analysis of intersample safety.}}] \label{def:safety}
    A set $\C \subset\R^n $ is forward invariant for system \eqref{eq:cl_dyn} if $\mb{x}_0 \in \C \implies \mb{x}_k \in \C $ for all $k \in \mathbb{N}$. We define ``safety'' \revised{with respect to $\mathcal{C}$} as the forward invariance of $\C$.  
\end{definition} 

One method for ensuring safety is through the use of Discrete-Time Control Barrier Functions (DTCBFs). For DTCBFs, we consider safe sets that are $0$-superlevel sets \cite{ames_control_2017} of a continuous function $h:\R^n \to \R$:  
\begin{align}
    \C = \{ \mb{x} \in \R^n ~|~ h(\mb{x}) \geq 0 \}. 
\end{align}

In particular the DTCBF is defined as:
\begin{definition}[Discrete-Time Control Barrier Function (DT-CBF) \cite{agrawal_discrete_2017}]
Let $\C \subset \R^n$ be the $0$-superlevel set of some function $h: \R^n \to \R$. The function $h$ is a DTCBF for $\mb{x}_{k+1} = \mb{F}(\mb{x}_k, \mb{u}, \mb{0}) $ if there exists an $\alpha \in [0,1]$ such that: 
\begin{align}
    \sup_{\mb{u} \in \R^m} h(\mathbf{F}(\mathbf{x}, \mathbf{u}, \mb{0})) > \alpha h(\mb{x}), \quad \quad \forall \mathbf{x}\in\C \label{eq:det_dtcbf_cond}
\end{align} 
\end{definition}

DTCBFs differ from their continuous-time counterparts in that they satisfy an inequality constraint on their \textit{finite difference} instead of their derivative\footnote{The standard continuous-time CBF condition $\dot{h}(\mb{x}) \leq -\overline{\gamma} h(\mb{x})$ for $\overline{\gamma}>0$ becomes $  h(\mb{x}_{k+1}) - h(\mb{x}_k)  \geq - \gamma h(\mb{x}_k)$ for $\gamma \in [0,1]$ in discrete-time; defining $\alpha = 1 - \gamma$ recovers the condition $h(\mb{x}_{k+1}) \geq \alpha h(\mb{x}_k)$.}. 
On the other hand, they are similar in their ability to create \emph{safety filters} for nominal controllers $\mb{k}_\textup{nom}: \R^n \times \mathbb{Z} \to \R^m$ of the form:
\begin{align}
    \mb{k}(\mb{x}) = \argmin_{\mb{u} \in \R^m } & \quad \Vert \mb{u} - \mb{k}_\textup{nom}(\mb{x}, k) \Vert^2 \label{eq:dtcbf-op}\\
    \textup{s.t. } & \quad h(\mb{F}(\mb{x}, \mb{u}, \mb{0})) \geq \alpha h(\mb{x}). \nonumber 
\end{align}
Assuming feasibility,
\ifthenelse{\boolean{long}}{
\revised{\footnote{If infeasible, a slack variable can be added to recover feasibility and its \revised{e}ffect on safety can be analyzed using the ISSf framework \cite{kolathaya_input--state_2019}. Additionally, unlike the affine inequality constraint that arises with continuous-time CBFs \cite{ames_control_2017}, the optimization problem \eqref{eq:dtcbf-op} is not necessarily convex. To ameliorate this issue, it is often assumed that $h\circ \mb{F}$ is concave with respect to $\mb{u}$ \cite{zeng_safety-critical_2021, agrawal_discrete_2017, ahmadi_safe_2019}.} }}{}
$\mb{k}(\mb{x})$ guarantees safety for the undisturbed system by selecting inputs that satisfy \eqref{eq:det_dtcbf_cond}\cite[Prop. 1]{agrawal_discrete_2017}.

For deterministic systems, infinite-horizon safety guarantees are common. 
However, \revised{for discrete-time stochastic systems, when the disturbance is bounded, infinite horizon guarantees fail to capture the nuances of variable risk levels and, when the disturbance is unbounded, infinite-horizon guarantees can be impossible to achieve}\footnote{\revised{Consider the system: $\mb{x}_{k+1} = \mb{u}_k + \mb{d}_k$, where $\mb{x}\in \R$, $\mb{u}\in\R$, $\mb{d}\sim \mathcal{N}(0,1)$, and $\mathcal{C}=\{ \mb{x}\in\R ~|~ \vert \mb{x} \vert < 1 $\}. At every time step, $\mathbb{P}\{ \mb{x}_{k+1} \in \mathcal{C}\}$ is maximized with $\mb{u}_k =0$, but then even over a single discrete step, there is at least 30\% chance of failure. As time continues, this constant risk of failure at every step makes infinite horizon guarantees impossible to achieve.}} \cite[Sec. IV]{culbertson_input--state_2023}. \revised{In order to provide an achievable risk-based guarantee we choose to analyze finite-time safety probabilities as in \cite{kushner_stochastic_1967, steinhardt_finite-time_2012, santoyo_barrier_2021, prajna2004stochastic} instead of infinite-time safety guarantees.} 

\begin{definition}[$K$-step Exit Probability]
    For any $K\in \mathbb{N}_1$ and initial condition $\mb{x}_0 \in \R^n$, the $K$-step exit probability of the set $\C$ for the closed-loop system \eqref{eq:cl_dyn} is: 
    \begin{align}
        P_u(K, \mb{x}_0) = \P \left\{ \mb{x}_k \notin \C \textup{ for some }  k \revised{\leq K} \right\} 
    \end{align}
\end{definition}
\noindent This describes the probability that the system will leave the safe set $ \C$ within $K $ time steps given that it started at $\mb{x}_0$.

\ifthenelse{\boolean{long}}{
    }{
        \vspace{-0.5em}
    }
\subsection{Existing Martingale-based Safety Methods}

In this work, we will generate bounds on $K$-step exit probabilities using martingale-based concetration inequalities. 
Martingales are a class of stochastic processes which satisfy a relationship between their mean and previous value. 
\begin{definition}[Martingale \cite{grimmett_probability_2020}, \cite{steinhardt_finite-time_2012}]
    Let $(\Omega, \F, \P)$ be a probability space with a filtration $\{ \F_0, \F_1,  \dots, \F\} $. A stochastic process $\mart_k $ that is adapted to the filtration and is integrable at each $k$ 
    is a martingale if 
    \begin{align}
        \E [~\mart_{k+1}~|~ \F_k~] = \mart_k, \quad \forall k \in \mathbb{Z} \quad \textup{ (a.s.\revised{)}}
    \end{align}
    Additionally, if $\mart_k$ satisfies: 
    \begin{align}
        \E[~\mart_{k+1} ~|~ \F_k~] \leq \mart_k + \revised{c}, \quad \forall k \in \mathbb{Z}\quad  \textup{ (a.s.)}, \label{eq:supermart}
    \end{align}
    \revised{with $c=0$ then it is a supermartingale and if it satisfies \eqref{eq:supermart} with $c\geq0 $ then it is a $c-$martingale.}
\end{definition}

Many concentration inequalities can be used to bound the spread of a martingale over time. One particularly useful bound is Ville's 
\cite{ville1939etude} which bounds the probability that a supermartingale $\mart_k$ rises above a threshold $\lambda>0$. 
\begin{lemma}[Ville's Inequality \cite{ville1939etude}] \label{thm:villes}
    If $\mart_k$ \revised{is} a nonnegative supermartingale, then for all $\lambda > 0$,
    \begin{align}
        \textstyle \P \left\{ \sup_{k\in \mathbb{Z}} \mart_k  > \lambda \right\} \leq \frac{\E[\mart_0]}{\lambda} \label{eq:villes}
    \end{align}
\end{lemma}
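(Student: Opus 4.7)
My plan is to prove Ville's inequality through a stopping-time argument combined with Markov's inequality applied to the stopped process. The core idea is to rewrite the supremum event as a first-passage event and then control its probability using the standard optional-stopping inequality $\E[\mart_{\tau}] \leq \E[\mart_0]$ available for nonnegative supermartingales at bounded stopping times.

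First I would introduce the hitting time $\tau = \inf\{k \geq 0 : \mart_k > \lambda\}$ (with the convention $\inf \emptyset = \infty$) and observe that
$$\{\textstyle\sup_{k \geq 0} \mart_k > \lambda\} = \{\tau < \infty\} = \bigcup_{n \geq 0} \{\tau \leq n\},$$
so by continuity from below it suffices to bound $\P\{\tau \leq n\}$ uniformly in $n$. Since $\tau \wedge n$ is a bounded stopping time adapted to $\{\F_k\}$, iterating the supermartingale property on the stopped process $\mart_{k \wedge \tau}$ yields $\E[\mart_{\tau \wedge n}] \leq \E[\mart_0]$. On $\{\tau \leq n\}$ the definition of $\tau$ gives $\mart_{\tau \wedge n} = \mart_\tau > \lambda$, while on the complementary event nonnegativity gives $\mart_{\tau \wedge n} \geq 0$. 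Combining these, I get the pointwise lower bound $\mart_{\tau \wedge n} \geq \lambda \1_{\{\tau \leq n\}}$; taking expectations produces $\lambda \P\{\tau \leq n\} \leq \E[\mart_0]$, and sending $n \to \infty$ closes the argument.

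The main obstacle is the justification of the optional-stopping step $\E[\mart_{\tau \wedge n}] \leq \E[\mart_0]$. This is standard but rests on two facts: the stopping time is bounded by $n$, and each $\mart_k$ is integrable by assumption. A short inductive argument on $j = 0, 1, \dots, n$, applying $\E[\mart_{j+1} \mid \F_j] \leq \mart_j$ together with the tower property and the $\F_j$-measurability of $\{\tau \leq j\}$, suffices. The role of nonnegativity is subtle but crucial — it is what makes the pointwise bound $\mart_{\tau \wedge n} \geq \lambda \1_{\{\tau \leq n\}}$ hold everywhere, so that Markov's inequality applies; this is precisely the hypothesis the paper's main contribution sets out to relax by invoking Freedman's inequality instead.
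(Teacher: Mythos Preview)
Your proof is correct and follows essentially the same stopping-time argument as the paper's appendix: both introduce the first-passage time $\tau$, invoke optional stopping to get $\E[\mart_{\tau\wedge n}]\leq\E[\mart_0]$, and then use nonnegativity together with $\mart_\tau>\lambda$ on $\{\tau<\infty\}$ to conclude. The only cosmetic difference is that you pass to the limit via continuity from below of $\P\{\tau\leq n\}$, whereas the paper applies Fatou's lemma to $\liminf_{k\to\infty}\E[\mart_{k\wedge\tau}\mathds{1}_{\{\tau<\infty\}}]$; both routes are standard and equivalent here.
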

\noindent Critically, Ville's inequality assumes \textit{nonnegativity} which manifests as a requirement that $h$ be upper-bounded, e.g. \eqref{assp:upper_bound}. \ifthenelse{\boolean{long}}{
        A proof of Ville's inequality can be found in Appendix \ref{appdx:villes}
    }{
    
    }


For safety applications of Ville's inequality, we consider the case where $h(\mb{x}_k)$ is upper bounded by $B>0$ and satisfies one of the following expectation conditions
\begin{align}
    \E[ ~h(\mb{F}(\mb{x}_k, \mb{k}(\mb{x}_k), \mb{d}_k))  ~|~ \F_{k} ~] & \geq \alpha h(\mb{x}_k),  \label{eq:expect_cbf_cond} \tag{DTCBF}\\ 
        \E[~h(\mb{F}(\mb{x}_k, \mb{k}(\mb{x}_k), \mb{d}_k)) ~|~ \F_{k} ~]& \geq h(\mb{x}_k) - c \tag{$c$-mart.},\label{eq:c_mart_cond} 
\end{align}
for some $\alpha \in (0,1)$ or $c \geq 0 $. 
In this case, we can achieve the following bound on the $K$-step exit probability, $P_u(K, \mb{x}_0)$: 
\begin{theorem}[Safety using Ville's Inequality\footnote{\revised{
\ifthenelse{\boolean{long}}{
        See Appx. \ref{appx:santoro}
    }{
        See \cite[Appx. C]{cosner2024bounding} 
    }
for a discussion notational differences between this presentation of Thm. \ref{thm:rss} and that in \cite{santoyo_verification_2019} and \cite{steinhardt_finite-time_2012}. Also, see \cite[Thm. 5]{cosner_robust_2023} for probability bounds associated with the general condition $\E[h(\mb{F}(\mb{x}_k, \mb{k}(\mb{x}_k), \mb{d}_k))~|~ \mathscr{F}_k] \geq \alpha h(\mb{x}_k) - c$ using Ville's inequality. }}, \cite{cosner_robust_2023, steinhardt_finite-time_2012, kushner_stochastic_1967, santoyo_barrier_2021}] \label{thm:rss}
    If, for some $B > 0$ and $K \in \mathbb{N}_1$, the function $h: \R^n \to \R$ satisfies: 
    \begin{align}
        & \quad \quad \quad \quad \quad\quad  h(\mb{x}) \leq B, \quad \textup{for all } \mb{x} \in  \R^n, \label{assp:upper_bound} \\
       &\textup{then: } \quad\quad \quad \quad \quad \quad P_u(K, \mb{x}_0) \leq 1 -  \frac{\lambda}{B},  \label{eq:villes_safety_bound} \quad \quad \\
        &\textup{where } \lambda = \begin{cases}
            \alpha^Kh(\mb{x}_0), & \textup{if }\eqref{eq:cl_dyn} \textup{ satisfies }  \eqref{eq:expect_cbf_cond} \; \forall k \leq K\\
            h(\mb{x}_0) - cK, & \textup{if }\eqref{eq:cl_dyn} \textup{ satisfies } \eqref{eq:c_mart_cond}\;  \forall k \leq K . 
        \end{cases} \nonumber
    \end{align}
\end{theorem}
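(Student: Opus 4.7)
The plan is to recast safety failure as a hitting event for a nonnegative supermartingale $W_k$, then invoke Ville's inequality (Lemma~\ref{thm:villes}) to bound its probability. For each hypothesis I will design $W_k$, depending on $h(\mb{x}_k)$ and a time-varying correction, such that (i) $W_k \geq 0$ for $k=0,1,\dots,K$, (ii) $\E[W_{k+1}\,|\,\F_k]\leq W_k$, and (iii) $h(\mb{x}_k)<0$ implies $W_k > B$. Given (i)--(iii), extending $W_k$ constantly for $k>K$ yields $P_u(K,\mb{x}_0) \leq \P\{\sup_{k}W_k > B\} \leq \E[W_0]/B$ by Ville's inequality, so matching $\E[W_0]$ to the claimed $\lambda$ finishes the proof.

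For the DTCBF hypothesis I would set $W_k = B - \alpha^{K-k} h(\mb{x}_k)$. Using \eqref{assp:upper_bound} and $\alpha^{K-k}\in[0,1]$, when $h(\mb{x}_k)\geq 0$ we get $W_k \in [0,B]$, and when $h(\mb{x}_k)<0$ we get $W_k > B$; this establishes (i) and (iii). The supermartingale property follows directly from \eqref{eq:expect_cbf_cond}:
\[
\E[W_{k+1}\,|\,\F_k] = B - \alpha^{K-k-1}\E[h(\mb{x}_{k+1})\,|\,\F_k] \leq B - \alpha^{K-k}h(\mb{x}_k) = W_k.
\]
Since $\E[W_0]=B-\alpha^K h(\mb{x}_0)$, Ville's inequality yields $P_u(K,\mb{x}_0)\leq 1-\alpha^K h(\mb{x}_0)/B$.

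For the $c$-martingale hypothesis I would instead take $W_k = B - h(\mb{x}_k) + (K-k)c$. Nonnegativity is immediate from $h(\mb{x}_k) \leq B$ and $c\geq 0$, and $h(\mb{x}_k)<0$ gives $W_k>B+(K-k)c\geq B$. Using \eqref{eq:c_mart_cond},
\[
\E[W_{k+1}\,|\,\F_k] = B - \E[h(\mb{x}_{k+1})\,|\,\F_k] + (K-k-1)c \leq B - h(\mb{x}_k) + (K-k)c = W_k.
\]
Then $\E[W_0] = B - h(\mb{x}_0) + Kc$, and Ville's inequality gives $P_u(K,\mb{x}_0)\leq 1 - (h(\mb{x}_0)-cK)/B$ as claimed. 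The only real obstacle is identifying the correct time-varying correction: without the $\alpha^{K-k}$ factor (respectively, the $(K-k)c$ offset), the natural candidate $B-h(\mb{x}_k)$ is only a submartingale (resp.\ $c$-submartingale), and Ville's inequality would not apply.
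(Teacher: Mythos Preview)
Your proof is correct and follows essentially the same route as the paper: in both cases you construct the same (or a constant multiple of the same) nonnegative supermartingale and apply Ville's inequality at the appropriate threshold. For the $c$-martingale case your process $W_k = B - h(\mb{x}_k) + (K-k)c$ is identical to the paper's; for the DTCBF case the paper uses $W_k = B\alpha^{-K} - \alpha^{-k}h(\mb{x}_k)$, which is just your $B - \alpha^{K-k}h(\mb{x}_k)$ scaled by the positive constant $\alpha^{-K}$, so the resulting bound is the same.
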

\noindent This guarantees that the risk of the becoming unsafe is upper bounded by a function which decays to 1 with time and which depends on the system's initial safety ``fraction'', $h(\mb{x}_0)/B$. 
\ifthenelse{\boolean{long}}{
        A proof of this Theorem can be found in Appendix \ref{appdx:rss_and_stein}.
    }{
    
    }

\ifthenelse{\boolean{long}}{
    }{
    }

\section{Safety Guarantees using Freedman's Inequality}

This section presents our main result: $K$-step exit probability bounds for DTCBFs and c-martingales generated using Freedman's inequality, a particularly strong martingale concentration inequality. 
Here, we use the simpler, historical version as presented by Freedman \cite{freedman1975tail}; see \cite{fan2012hoeffding} for historical context and a new, tighter alternative which could also be used. After presenting this result, this section explores comparisons with existing Ville's-based methods and input-to-state safety. 


Before presenting Freedman's inequality, we must define the predictable quadratic variation (PQV) of a process which is a generalization of variance for stochastic processes. 
\begin{definition}[Predictable Quadratic Variation (PQV) \cite{grimmett_probability_2020}]
    The PQV of a martingale $\mart_k$ at $K \in \mathbb{N}_1$ is: 
    \begin{align}
        \langle \mart \rangle_K \triangleq \textstyle \sum_{i=1}^K \E[(\mart_i - \mart_{i-1})^2 ~|~ \F_{i-1}]  
    \end{align}
\end{definition}

Unlike Ville's inequality, Freedman's inequality does not require nonnegativity of the martingale $\mart_k$, thus removing the upper-bound requirement \eqref{assp:upper_bound} on $h$. In place of nonnegativity, we require two alternative assumptions:
\begin{assumption}[Upper-Bounded Differences]\label{assp:bounded_mart_diff} 
    We assume that the martingale differences are upper-bounded by $ 1$ \revised{(i.e. $W_{k+1} - W_k \leq 1$, similar to Azuma-Hoeffding methods \cite{grimmett_probability_2020}).}
\end{assumption}
\begin{assumption}[Bounded PQV]\label{assp:bounded_pred_quad_var}
    We assume that the PQV
    is upper-bounded by $\xi^2>0$.
\end{assumption}

Given the PQV of the process, Freedman's inequality
provides the following bound:
\begin{theorem}[{Freedman's Inequality \cite[\revised{Thm. 4.1}]{freedman1975tail}}] \label{thm:azuma_hoeffding}
    If, for some $K \in \mathbb{N}_1$ and $\xi > 0 $, $\mart_k$ is a supermartingale with $W_0=0$ such that: 
    \begin{align}
        (\mart_k - \mart_{k-1})\leq 1&  \quad \textup{ for all } k \leq K,   \tag{Assumption \ref{assp:bounded_mart_diff}}\\
        \langle \mart\rangle_K \leq \xi^2,&  \tag{Assumption \ref{assp:bounded_pred_quad_var}}
    \end{align}
     then, for any $\lambda \geq 0 $, 
    \begin{align}
        \textstyle \P \left\{ \max_{k \leq K } \mart_k \geq \lambda  \right\} \leq H(\lambda, \xi) \triangleq \left( \frac{\xi^2}{\lambda + \xi^2}\right)^{\lambda + \xi^2} e^{\lambda }.   \label{eq:hoeff_ineq}
    \end{align}
\end{theorem}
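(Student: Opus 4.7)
The plan is to use the standard Chernoff/exponential-supermartingale argument that is the essentially unique way to prove this inequality, tailored so that we may invoke Ville's inequality (Lemma \ref{thm:villes}) at the end. Write $X_k \triangleq W_k - W_{k-1}$ for the martingale differences, so Assumption \ref{assp:bounded_mart_diff} says $X_k \leq 1$ almost surely, the supermartingale property says $\E[X_k \mid \F_{k-1}] \leq 0$, and the PQV bound says $\sum_{k \leq K} \E[X_k^2 \mid \F_{k-1}] \leq \xi^2$. The key one-variable inequality that drives everything is
\begin{equation*}
e^{\theta x} \leq 1 + \theta x + (e^\theta - 1 - \theta)\, x^2, \qquad \text{for all } x \leq 1,\; \theta \geq 0,
\end{equation*}
which one verifies by checking that the function $g(x) = (e^{\theta x} - 1 - \theta x)/x^2$ is nondecreasing on $(-\infty,1]$ (or by a direct Taylor argument).

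Next I would fix $\theta \geq 0$ and introduce the exponential process
\begin{equation*}
Z_k \triangleq \exp\!\Bigl( \theta W_k - (e^\theta - 1 - \theta)\, \langle W \rangle_k \Bigr),
\end{equation*}
with $Z_0 = 1$ since $W_0 = 0$ and $\langle W \rangle_0 = 0$. Taking conditional expectation of $Z_k / Z_{k-1} = \exp(\theta X_k - (e^\theta -1-\theta)\E[X_k^2 \mid \F_{k-1}])$, applying the pointwise inequality above to $X_k$, and then using $1+y \leq e^y$ together with $\E[X_k \mid \F_{k-1}] \leq 0$ and $\theta \geq 0$, shows $\E[Z_k \mid \F_{k-1}] \leq Z_{k-1}$. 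Thus $Z_k$ is a nonnegative supermartingale with $\E[Z_0] = 1$, so Ville's inequality yields $\P\{\max_{k \leq K} Z_k \geq \mu\} \leq 1/\mu$ for every $\mu > 0$.

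To convert this into a bound on $\max_{k \leq K} W_k$, I would observe that on the event $\{\max_{k \leq K} W_k \geq \lambda\}$, there is a first hitting index $k^* \leq K$ at which $W_{k^*} \geq \lambda$; combining this with monotonicity of $\langle W\rangle_k$ and the uniform PQV bound $\langle W\rangle_{k^*} \leq \xi^2$ gives
\begin{equation*}
Z_{k^*} \;\geq\; \exp\!\bigl(\theta \lambda - (e^\theta - 1 - \theta)\xi^2\bigr).
\end{equation*}
Therefore $\P\{\max_{k \leq K} W_k \geq \lambda\} \leq \exp(-\theta\lambda + (e^\theta - 1 - \theta)\xi^2)$ for every $\theta \geq 0$.

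Finally, I would optimize over $\theta$. Differentiating $-\theta\lambda + (e^\theta-1-\theta)\xi^2$ in $\theta$ and setting it to zero gives the minimizer $\theta^\star = \log(1 + \lambda/\xi^2) \geq 0$; substituting back produces the exponent $\lambda - (\lambda + \xi^2)\log(1 + \lambda/\xi^2)$, which exponentiates to exactly $H(\lambda,\xi) = \bigl(\xi^2/(\lambda+\xi^2)\bigr)^{\lambda+\xi^2} e^\lambda$. The main obstacle, and really the only non-bookkeeping step, is establishing the pointwise exponential inequality and carefully verifying that $Z_k$ is a supermartingale under only the assumed one-sided bound $X_k \leq 1$ (rather than a two-sided bound); everything else is Ville's inequality plus calculus.
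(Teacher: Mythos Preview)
Your proposal is correct and follows essentially the same approach as the paper's restatement of Freedman's proof: both use the pointwise bound $e^{\theta x}\le 1+\theta x+(e^\theta-1-\theta)x^2$ for $x\le 1$ to show that the exponential process $\exp(\theta W_k-(e^\theta-1-\theta)\langle W\rangle_k)$ is a nonnegative supermartingale, then apply a maximal/stopping-time argument and optimize $\theta=\log(1+\lambda/\xi^2)$. The only cosmetic differences are that the paper carries out a slightly more elaborate mixture decomposition to bound the conditional MGF and invokes the stopped-supermartingale/Fatou argument directly rather than quoting Ville's inequality by name, whereas your route via $\E[X_k\mid\F_{k-1}]\le 0$ and Lemma~\ref{thm:villes} is a bit more streamlined.
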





\ifthenelse{\boolean{long}}{

    \noindent See  \cite[Appx. ]{cosner2024bounding} \ref{appdx:freedman} for a restatement of Freedman's proof.

}{

    \revised{To prove this, we first bound the moment generating function $\E[e^{\gamma X}] \leq e^{(e^\gamma - 1 - \gamma) \textup{Var}(X)}$ for $\gamma \geq 0 $ and 0-mean  random variables $X$ bounded by 1. By choosing $X$ as the martingale differences, this is used to prove that $ \int_{\{\tau < \infty\}} e^{\gamma W_\tau   - (e^\gamma -1 - \gamma )\langle W \rangle_\tau } d \mathbb{P} \leq 1$ for $W_k$ satisfying assp. \eqref{assp:bounded_mart_diff} where $\tau$ any stopping time \cite[Prop. 3.3]{freedman1975tail}. Bounding with Assp. \eqref{assp:bounded_pred_quad_var} and optimizing $\gamma$ then yields the desired bound. A restatement of this proof can be found in \cite[Appx. D]{cosner2024bounding}. }

}

\ifthenelse{\boolean{long}}{
    }{
        \vspace{-0.75em}
    }

\subsection{Main Result: Freedman's Inequality for Safety}


Next we present the key contribution of this paper: the application of Freedman's inequality
to systems which satisfy the \ref{eq:expect_cbf_cond}  or $c$-martingale conditions.

\begin{theorem} \label{thm:main}
    If, for some $K \in \mathbb{N}_1, \sigma> 0  $, and $\delta > 0 $, the following bounds\footnote{\revised{Only upper-bounds on $\delta$ and $\sigma^2$ are required for  \eqref{eq:main_bound} to hold and this guarantee is robust to changes in distribution that still satisfy \eqref{assp:thm_main_bounded_diff} and \eqref{assp:thm_main_pqv}. For real-world systems, distribution-learning can be employed, similar to  \cite{cosner2023generative}.}} on the difference\footnote{\revised{\ifthenelse{\boolean{long}}{
            See Appx. \ref{appx:construct_delta}
        }{
            See \cite[Appx. G]{cosner2024bounding} 
        }
        for a constructive method for determining $\delta$ and $\sigma$. 
        }
    } between the true and predictable update \eqref{assp:thm_main_bounded_diff} and the conditional variance \eqref{assp:thm_main_pqv}  hold for all $k \leq K$: 
    \begin{align}
        \E[~h(\mb{x}_{k})~|~\F_{k-1}~] - h(\mb{x}_{k})   &\leq  \delta, \label{assp:thm_main_bounded_diff} \\
        \textup{Var}(~h(\mb{x}_{k+1}) ~|~ \F_k~) & \leq \sigma^2, \label{assp:thm_main_pqv}
    \end{align}
    then the $K$-step exit probability is bounded as: 
    \begin{align}
         &\textstyle \quad \quad \quad \quad P_u(K, \mb{x}_0) \leq H\left(\frac{\lambda}{\delta}, \frac{\sigma \sqrt{K} }{ \delta }  \right), \label{eq:main_bound}\\ 
    \textup{where } \lambda &= \begin{cases}
            \alpha^Kh(\mb{x}_0), & \textup{if }\eqref{eq:cl_dyn} \textup{ satisfies }  \eqref{eq:expect_cbf_cond} \; \forall k \leq K,\\
            h(\mb{x}_0) - cK, & \textup{if }\eqref{eq:cl_dyn} \textup{ satisfies } \eqref{eq:c_mart_cond}\;  \forall k \leq K . 
        \end{cases} \nonumber
    \end{align}
\end{theorem}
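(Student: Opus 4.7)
The plan is to convert the sequence $\{h(\mb{x}_k)\}$ into a mean-zero martingale $N_k$ in each case, verify Freedman's hypotheses after a suitable normalization, and translate the resulting tail bound into the stated bound on $P_u(K,\mb{x}_0)$. The key reduction in both cases will be a set inclusion of the form $\{h(\mb{x}_k) < 0 \textup{ for some } k \leq K\} \subseteq \{\max_{k \leq K}\tilde{N}_k \geq \lambda/\delta\}$ for an appropriately normalized martingale $\tilde{N}_k$, to which Thm.~\ref{thm:azuma_hoeffding} applies directly.

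For the \eqref{eq:c_mart_cond} case, I would set $D_i \triangleq \E[h(\mb{x}_i)\mid\F_{i-1}] - h(\mb{x}_i)$ and $N_k \triangleq \sum_{i=1}^k D_i$, so $N_0 = 0$ and $N_k$ is a martingale. The $c$-mart hypothesis combined with telescoping gives $N_k \geq h(\mb{x}_0) - h(\mb{x}_k) - ck$, so $h(\mb{x}_k) < 0$ forces $N_k > h(\mb{x}_0) - cK = \lambda$ using $c \geq 0$. Assumption~\eqref{assp:thm_main_bounded_diff} bounds the increments by $D_i \leq \delta$, and Assumption~\eqref{assp:thm_main_pqv} yields $\langle N \rangle_K = \sum_{i=1}^K \textup{Var}(h(\mb{x}_i)\mid\F_{i-1}) \leq K\sigma^2$. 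The rescaled martingale $\tilde{N}_k = N_k/\delta$ then satisfies Freedman's hypotheses with $\xi = \sigma\sqrt{K}/\delta$, and applying Thm.~\ref{thm:azuma_hoeffding} at threshold $\lambda/\delta$ completes this case.

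For the \eqref{eq:expect_cbf_cond} case, first pass to the rescaled submartingale $\tilde{h}_k = h(\mb{x}_k)/\alpha^k$ (a submartingale by the DTCBF condition), let $\tilde{D}_i = \E[\tilde{h}_i\mid\F_{i-1}] - \tilde{h}_i = D_i/\alpha^i$, and $N_k = \sum_{i=1}^k \tilde{D}_i$. Telescoping yields $N_k \geq h(\mb{x}_0) - h(\mb{x}_k)/\alpha^k$, so $h(\mb{x}_k) < 0$ implies $N_k > h(\mb{x}_0) \geq \alpha^K h(\mb{x}_0) = \lambda$. The raw increments $\tilde{D}_i \leq \delta/\alpha^i$ and conditional variances $\textup{Var}(\tilde{h}_i\mid\F_{i-1}) \leq \sigma^2/\alpha^{2i}$ both grow geometrically in $i$, so a naive $1/\delta$ rescaling fails Freedman's ``differences $\leq 1$'' hypothesis. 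The remedy is to rescale by $\alpha^K/\delta$: the normalized increments become $\alpha^{K-i} \leq 1$, the normalized PQV collapses via the identity $\sum_{i=1}^K \alpha^{2(K-i)} = \sum_{j=0}^{K-1}\alpha^{2j} \leq K$ to at most $K\sigma^2/\delta^2$, and the normalized threshold becomes $h(\mb{x}_0)\alpha^K/\delta = \lambda/\delta$, matching Case~1 exactly. Thm.~\ref{thm:azuma_hoeffding} again yields the bound.

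The main technical obstacle is the DTCBF case, where the $\alpha^{-i}$-weighting introduces geometrically growing martingale differences and variances that must be controlled by a single rescaling that simultaneously (i) produces increments bounded by $1$, (ii) recovers the same effective $\xi = \sigma\sqrt{K}/\delta$ as the $c$-mart case, and (iii) matches the threshold $\lambda/\delta$ with $\lambda = \alpha^K h(\mb{x}_0)$. The factor $\alpha^K/\delta$ together with the geometric-sum bound $\sum_{j=0}^{K-1}\alpha^{2j} \leq K$ is what makes all three constraints fall into place; once this is observed, the rest of the argument is a direct application of Freedman's inequality.
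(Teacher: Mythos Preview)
Your proposal is correct and arrives at exactly the same martingale as the paper: your rescaled process $\tilde N_k=\sum_{i=1}^k \alpha^{K-i}D_i/\delta$ (with $\alpha=1$ in the $c$-mart case) coincides with the paper's Doob martingale $M_k$, and the key weights $\alpha^{K-i}$ together with the PQV bound $\sum_{i=1}^K\alpha^{2(K-i)}\leq K$ are identical. The only organizational difference is that the paper handles both cases at once via a general condition $\E[h(\mb x_{k+1})\mid\F_k]\geq\tilde\alpha h(\mb x_k)-\tilde c$, builds an explicit supermartingale $W_k$ first, and then invokes Doob's decomposition, whereas you construct the compensated martingale directly and treat the two cases separately; both routes are equally valid.
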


\revised{
To apply Thm. \ref{thm:azuma_hoeffding} to achieve Thm. \ref{thm:main} we follow this proof structure: 
\textbf{(Step 1)} normalize $h$ and use it to construct a candidate supermartingale $W_k$, \textbf{(Step 2)} verify that $W_k$ is indeed a supermartingale with $W_0 = 0$, \textbf{(Step 3)} use Doob's decomposition \cite[Thm 12.1.10]{grimmett_probability_2020} to produce a martingale $M_k$ from $W_k$ in order to remove the negative effect of safe, predictable jumps from the PQV, \textbf{(Step 4)} verify that $M_k$ satisfies Assp.s \ref{assp:bounded_mart_diff} and \ref{assp:bounded_pred_quad_var}, \textbf{(Step 5)} choose $\lambda\geq 0$ such that a safety failure implies $\{\max_{k \leq K} W_k \geq \lambda\}$ as in \eqref{eq:hoeff_ineq}, 
and \textbf{(Step 6)} specialize to specific values of $\alpha$ and $c$ for each case.
}

\begin{proof}
    \revised{\underline{\textbf{(Step 1)}}} Consider the case, for $\Tilde{\alpha} \in (0,1]$ and $\Tilde{c}\geq 0$, 
    \begin{align}
        &\textup{where } \E[h(\mb{x}_{k+1}) | \F_k] \geq \Tilde{\alpha} h(\mb{x}_k) - \Tilde{c}, \;\; \textup{for all }  k \leq K . \label{eq:general_expect_cond}
    \end{align}
    
    First, define the normalized safety function $
        \eta(\mb{x})  \triangleq  \textstyle \frac{h(\mb{x})}{\delta}$ to ensure that the martingale differences will be bounded by 1. 
    Next, use $\eta$ to define the candidate supermartingale\footnote{ \revised{
    We use the ``empty sum'' convention that $\sum_{i=1}^0 \rho =0 $ for any $\rho\in \R$. }}
    \begin{align}
        \textstyle \mart_k \triangleq -\Tilde{\alpha}^{K - k} \eta(\mb{x}_k) + \Tilde{\alpha}^K \eta(\mb{x}_0) - \sum_{i=1}^k \Tilde{\alpha}^{K-i}\frac{\Tilde{c}}{\delta} \label{eq:cand_mart}
    \end{align} \revised{\underline{\textbf{(Step 2)}}} This  satisfies\footnote{$W_0 = 0$ since $\mb{x}_0$ is known and randomness first enters through $\mb{d}_0$.} $\mart_0 = 0$ and is a supermartingale:
    \begin{align}
        &\E[\mart_{k+1} | \F_{k}]\\ 
        &= \textstyle - \Tilde{\alpha}^{K-(k+1)} \E[\eta(\mb{x}_{k+1})|\F_k] + \Tilde{\alpha}^K \eta(\mb{x}_0) - \sum_{i=1}^{k+1} \Tilde{\alpha}^{K-i}\frac{\Tilde{c}}{\delta}, \nonumber\\
        & \leq -\Tilde{\alpha}^{K-k} \eta(\mb{x}_k)  + \Tilde{\alpha}^K \eta(\mb{x}_0)  -  \textstyle \sum_{i=1}^{k} \Tilde{\alpha}^{K-i}\frac{\Tilde{c}}{\delta} = \mart_k.  \nonumber    
    \end{align}
    \noindent which can be seen by applying the bound from \eqref{eq:general_expect_cond}.
    
    \noindent \revised{\underline{\textbf{(Step 3)}}} The martingale from Doob's decomposition is: 
    \begin{align}
        & M_k  \triangleq \mart_k + \textstyle \sum_{i=1}^k ( \mart_{i-1} - \E[\mart_i |\F_{i-1}]), \label{eq:doobs_mart} \\
        & = \mart_k + \textstyle \sum_{i=1}^k \underbrace{\textstyle\frac{\Tilde{\alpha}^{K-i}}{\delta}(\E[h(\mb{x}_{i}) | \F_{i-1}] - \Tilde{\alpha} h(\mb{x}_{i-1}) + \Tilde{c}) }_{\geq 0 } \geq \mart_k \nonumber  
    \end{align}
    where the bound 
    comes from 
    \eqref{eq:general_expect_cond} and positivity of $\Tilde{\alpha}$ and $\delta$.

    \noindent \revised{\underline{\textbf{(Step 4)}}} Furthermore, $M_k$ satisfies Assp. \ref{assp:bounded_mart_diff}:
    \begin{align}
        & M_k - M_{k-1}  = \mart_k - \E[\mart_{k}| \F_{k-1}],\\
        & = \Tilde{\alpha}^{K-k} (\E[\eta(\mb{x}_k) | \F_{k-1}] - \eta(\mb{x}_k) ) \leq \Tilde{\alpha}^{K-k} \textstyle \frac{\delta}{\delta} \leq 1,
    \end{align}
    since we assume in \eqref{assp:thm_main_bounded_diff} that $ \E[h(\mb{x}_k) ~|~ \F_{k-1}]  - h(\mb{x}_k) \leq \delta$. 

    Next,  $\Tilde{\alpha} \in (0,1] $ and   \eqref{assp:thm_main_pqv} ensure that $M_k$ satsifes Assp. \ref{assp:bounded_pred_quad_var}: 
    \begin{align}
        & \langle M \rangle_K  = \textstyle \sum_{i=1}^K\E[\Tilde{\alpha}^{2(K-i)}(\eta(\mb{x}_i) - \E[\eta(\mb{x}_i)|\F_{i-1}])^2 | \F_{i-1}] \nonumber \\
         & = \textstyle \sum_{i=1}^K \frac{\Tilde{\alpha}^{2(K-i)}}{\delta^2} \textup{Var}(h(\mb{x}_i) |\F_{i-1})\leq \sum_{i=1}^K \Tilde{\alpha}^{2(K - i )} \frac{\sigma^2}{\delta^2} \label{eq:tighter_pqv}\\
         & \leq \textstyle \frac{\sigma^2 K}{\delta^2}. 
    \end{align}

    \noindent \revised{\underline{\textbf{(Step 5)}}} Now, to relate the unsafe event $\{ \min_{k \leq K} h(\mb{x}_k) < 0\} $ to our martingale $M_k$ we consider the implications:  
    \begin{align}
        \textstyle \min&_{k\leq K}   h(\mb{x}_k)  < 0  
        \implies  \textstyle \min_{k \leq K} h(\mb{x}_k)  \leq 0   \\
        \iff &  \max_{k \leq K } - \Tilde{\alpha}^{K-k} \eta(\mb{x}_k) \geq 0, \quad \textup{ since } \Tilde{\alpha} > 0, \delta > 0  \label{eq:iff_mult}\\
        \iff & \max_{k \leq K } W_k \geq \Tilde{\alpha}^K \eta(\mb{x}_0) - \textstyle \sum_{i=1}^k \Tilde{\alpha}^{K-i}\frac{\Tilde{c}}{\delta} \label{eq:iff_add_zero} \\ 
        \implies &   \max_{k \leq K } M_k \geq \Tilde{\alpha}^K \eta(\mb{x}_0)  - \textstyle \sum_{i=1}^k \Tilde{\alpha}^{K-i}\frac{\Tilde{c}}{\delta} \label{eq:mart_bound}\\
        \implies &   \max_{k \leq K } M_k \geq \Tilde{\alpha}^K \eta(\mb{x}_0)  - \textstyle \sum_{i=1}^K \Tilde{\alpha}^{K-i}\frac{\Tilde{c}}{\delta}, \label{eq:c_nonneg_bound}
    \end{align}
    where \eqref{eq:iff_mult} is due to multiplication by a value strictly less than zero, \eqref{eq:iff_add_zero} is due to adding zero, \eqref{eq:mart_bound} is due to $M_k \geq W_k$ as in \eqref{eq:doobs_mart}, and \eqref{eq:c_nonneg_bound} is due to  $k \leq K$ and the nonnegativity of $\alpha, \delta$, and $\Tilde{c}$. Thus, the unsafe event satisfies the containment:  
    \begin{align}
        \left \{ \min_{k \leq K } h(\mb{x}_k) < 0 \right\} \subseteq  \left\{ \max_{k \leq K } M_k \geq \Tilde{\alpha}^K \eta(\mb{x}_0)  - \displaystyle{\sum_{i=1}^K} \Tilde{\alpha}^{K-i}\frac{\Tilde{c}}{\delta}  \right\} \nonumber
    \end{align}
    
    Since $M_k$ satisfies $M_0=0$, $M_k - M_{k-1}\leq 1 \; \forall k \leq K$ , and $\langle M\rangle_K\leq \frac{\sigma^2K}{\delta^2}$, we can apply Thm. \ref{thm:azuma_hoeffding} (Freedman's Ineq.) with $\lambda = \frac{\Tilde{\alpha}^K h(\mb{x}_0)  - \sum_{i=1}^K \Tilde{\alpha}^{K-i}\Tilde{c}}{\delta} $ to achieve the probability bound\footnote{The proof can end \revised{after Step 5} and can be applied to any system satisfying \eqref{eq:general_expect_cond}. We specialize to DTCBFs and $c$-martingales for clarity.}: 
    $ P_u(K, \mb{x}_0) \leq H\left( \frac{\Tilde{\alpha}^K h(\mb{x}_0)  - \sum_{i=1}^K \Tilde{\alpha}^{K-i}\Tilde{c}}{\delta} , \frac{\sigma \sqrt{K}}{\delta} \right) $.

    \noindent \revised{\underline{\textbf{(Step 6)}}} If the system satisfies the \ref{eq:expect_cbf_cond} condition, then \eqref{eq:general_expect_cond} holds with $(\Tilde{\alpha} = \alpha, \Tilde{c} = 0)$ so the desired bound is achieved with $\lambda = \alpha^K h(\mb{x}_0)/\delta$ and if the system satisfies the \ref{eq:c_mart_cond} condition then \eqref{eq:general_expect_cond} holds with $(\Tilde{\alpha} = 1, \Tilde{c} = c)$ so the desired bound is achieved with $\lambda = h(\mb{x}_0)/\delta - K c$. 
\end{proof}

\ifthenelse{\boolean{long}}{
    }{
        \vspace{-1.5em}
    }

\subsection{Bound Tightness Comparison}

\begin{figure}
    \centering
    \includegraphics[width=0.9\linewidth]{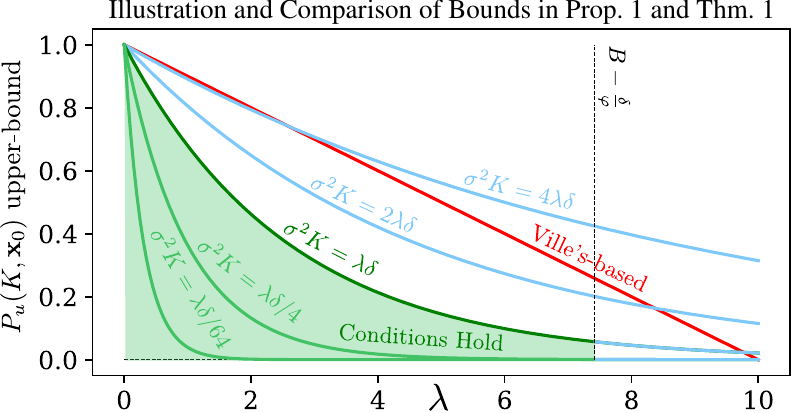}
    \caption{Comparison for Prop. \ref{thm:comparison} with $B=10, K = 100, \delta = 1,$ and varying $\sigma$ and $\lambda$. The Freedman-based bounds are shown in green when the conditions of \revised{Prop.} \ref{thm:comparison} hold and blue when they do not. The Ville's-based bound is shown in red. Code to reproduce this plot can be found at \cite{codebase}}
    \label{fig:bound_compare}
    \ifthenelse{\boolean{long}}{
        \vspace{-3em}  
    }{
        \vspace{-4em}    
    }
\end{figure}

\begin{figure*}[t!]
  \includegraphics[width=\textwidth]{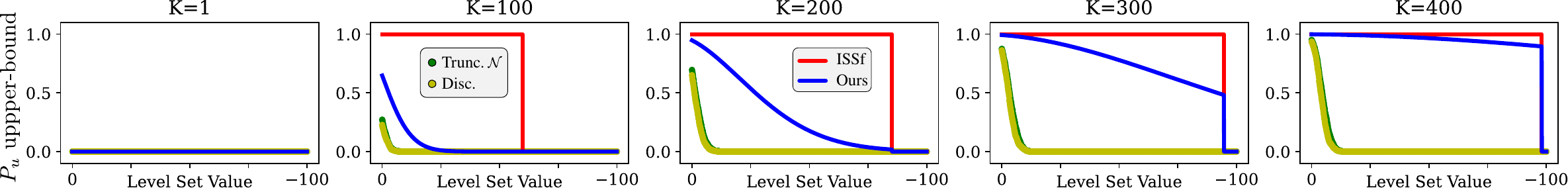}
  \caption{Probability that the system is unsafe: our bound from Cor. \ref{cor:issf} (blue), ISSf bound (red). The $x$-axis is the level set expansion $-\epsilon$ and the $y$-axis is the failure probability (lower is better). The plots from left to right indicate safety for $K = 1, 100, 200, 300, $ and $400$ steps. Simulations where $\E[h(\mb{x}_k) | \F_{k-1}] = \alpha h(\mb{x}_k) $ 
  and approximate probabilities from 1000 samples are shown for simulations where $h(\mb{x}_k)$ is sampled from 3 different conditional distributions: uniform (pink), truncated Gaussian (green), and a categorical (yellow) all which satisfy  Cor. \ref{cor:issf}. Code for these plots is can be found at \cite{codebase}. 
  }
  \label{fig:issf_compare}
  \ifthenelse{\boolean{long}}{\vspace{1em}}{    
  }{\vspace{-2em}}
\end{figure*}

We now relate the Freedman-based safety of Thm. \ref{thm:main} to the Ville's-based safety of Thm. \ref{thm:rss}. For systems that have an upper-bound $h$ \eqref{assp:upper_bound}, a lower-bounded error \eqref{assp:thm_main_bounded_diff}, and a bounded conditional variance \eqref{assp:thm_main_pqv}, we provide a range of values for $\sigma, \delta, K, B,$ and $\lambda$
for which Thm. \ref{thm:main} is stronger. \revised{This Prop. provides a direct theoretical comparison (after changing notation) to the Ville's-based bounds in \cite{steinhardt_finite-time_2012, santoyo_barrier_2021, kushner_stochastic_1967, cosner_robust_2023}.} 

 

\begin{proposition} \label{thm:comparison}

    For some $\sigma, \delta, B> 0 $, $\lambda \geq 0$ and $K \in \mathbb{N}_{1}$, consider the conditions
    \begin{align}
        \lambda\delta \geq \sigma^2 K, &&  \textstyle \lambda  \leq B - \frac{\delta}{\varphi}, \label{eq:comp_conds} 
    \end{align}
    where $\varphi = 2 \ln(2) - 1 $. If these conditions hold, then  
    \begin{align}
    \textstyle H\left( \frac{\lambda}{\delta}, \frac{\sigma\sqrt{K}}{\delta}\right) \leq 1 - \frac{\lambda}{B}.    \label{eq:comparison_thm} 
    \end{align}
\end{proposition}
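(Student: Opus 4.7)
My plan is to factor the proof into two separate exponential comparisons. First I will introduce $a = \lambda/\delta$ and $b = \sigma^2 K / \delta^2$, so that $H(\lambda/\delta,\sigma\sqrt{K}/\delta) = (b/(a+b))^{a+b}e^a$ and the first hypothesis $\lambda\delta \geq \sigma^2 K$ translates cleanly into $a \geq b$, i.e.\ $a/b \geq 1$. Taking a logarithm gives $\ln H = a - (a+b)\ln(1 + a/b) = -b\bigl[(1 + a/b)\ln(1 + a/b) - a/b\bigr]$. The overall strategy is to bound $H$ above by a pure exponential $e^{-\varphi \lambda/\delta}$ using the first hypothesis, and then to dominate that exponential by $1 - \lambda/B$ using the second hypothesis.

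For the first step I will establish the elementary inequality $(1+x)\ln(1+x) - x \geq \varphi x$ for every $x \geq 1$, with $\varphi = 2\ln 2 - 1$. Equality holds at $x = 1$, and differentiating yields $\ln(1+x) - \varphi$, which is nonnegative once $x \geq 1$ because $\ln 2 \geq \varphi$ is equivalent to the elementary fact $\ln 2 \leq 1$. Applying this with $x = a/b$ turns the bracketed expression above into a lower bound of $\varphi(a/b)$, giving $\ln H \leq -\varphi a$, i.e.\ $H(\lambda/\delta, \sigma\sqrt{K}/\delta) \leq e^{-\varphi \lambda/\delta}$.

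For the second step the remaining goal is $e^{-\varphi\lambda/\delta} \leq 1 - \lambda/B$. I will use the standard inequality $1 - e^{-u} \geq u/(1+u)$ for $u \geq 0$, which follows from $f(u) = 1 - (1+u)e^{-u}$ satisfying $f(0) = 0$ and $f'(u) = u e^{-u} \geq 0$. Substituting $u = \varphi \lambda / \delta$ gives $1 - e^{-\varphi\lambda/\delta} \geq \varphi\lambda/(\delta + \varphi \lambda)$, and a one-line rearrangement shows that $\varphi\lambda/(\delta + \varphi\lambda) \geq \lambda/B$ is equivalent to $\lambda \leq B - \delta/\varphi$, exactly the second hypothesis. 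Chaining the two inequalities produces \eqref{eq:comparison_thm}.

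The principal obstacle I anticipate is guessing the right intermediate bound, and in particular pinning down the constant $\varphi = 2\ln 2 - 1$ as the worst case of $(1+x)\ln(1+x) - x$ per unit $x$ attained at $x = 1$, which is precisely the boundary of the first hypothesis. Once one believes this is the correct ``split'' of the two hypotheses, both auxiliary inequalities collapse to routine one-variable calculus. A minor edge case to acknowledge is $\lambda = 0$, where the claim reduces to $1 \leq 1$ and the second step in particular is handled by inspection.
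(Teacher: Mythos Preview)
Your proof is correct and shares the same two-step architecture as the paper's: both first use the hypothesis $\lambda\delta \geq \sigma^2 K$ to reduce to the intermediate bound $H(\lambda/\delta,\sigma\sqrt{K}/\delta) \leq e^{-\varphi\lambda/\delta}$, and then use the hypothesis $\lambda \leq B - \delta/\varphi$ to show $e^{-\varphi\lambda/\delta} \leq 1 - \lambda/B$. The execution of each step, however, is genuinely different. For the first step the paper computes $\partial H/\partial(\sigma^2)$, factors it as a product of a negative term and a nonnegative term (the latter via $\ln r \geq 1 - 1/r$), and then substitutes the extreme value $\sigma^2 K = \lambda\delta$; you instead rewrite $\ln H = -b\bigl[(1+x)\ln(1+x) - x\bigr]$ with $x = a/b$ and prove the clean scalar inequality $(1+x)\ln(1+x) - x \geq \varphi x$ for $x \geq 1$, which is shorter and avoids the derivative computation entirely. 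For the second step the paper argues that $\Delta_1(\lambda) = 1 - \lambda/B - e^{-\varphi\lambda/\delta}$ is concave in $\lambda$ and nonnegative at the two endpoints $\lambda = 0$ and $\lambda = B - \delta/\varphi$; you instead invoke the single inequality $1 - e^{-u} \geq u/(1+u)$ and rearrange. Your route is more elementary and self-contained (no partial derivatives, no concavity), while the paper's monotonicity-in-$\sigma^2$ viewpoint has the minor conceptual advantage of making explicit that the bound only gets better as the variance shrinks.
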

\noindent 
Proof of this Proposition is is provided in the Appendix. 


Intuitively, conditions \eqref{eq:comp_conds} stipulate that 
the conditional variance $\sigma^2$ and number of steps $K$ must be limited by $\lambda\delta$, which is a function of the initial condition times the maximum single-step disturbance to $h(\mb{x}_k)$. Additionally, the initial condition must be less than the maximum safety bound $B$ by an amount proportional to $\delta$. The exact value of $\varphi$ is a result of the first assumption $(\lambda\delta \geq \sigma^2 K$) and alternative values can be found by changing this assumption; for clarity of presentation, we leave exploration of these alternative assumptions to future work. The safety bounds for various $\lambda$ and $\sigma$ are shown in Fig. \ref{fig:bound_compare} where it is clear that these conditions provide a \textit{conservative} set of parameters over which this proposition holds.


\ifthenelse{\boolean{long}}{
    }{
    }

\subsection{Extending Input-to-State Safety}

Since Thm. \ref{thm:main} assumes that $h$ has lower-bounded \revised{errors} \eqref{assp:thm_main_bounded_diff}, 
we can directly compare our method with Input-to-State Safety (ISSf) \cite{kolathaya_input--state_2019}, which provides almost-sure safety guarantees. 

In the context of our stochastic, discrete-time problem setting, the ISSf property can be reformulated as:  
\begin{proposition}[Input-to-State Safety] \label{thm:issf}
    If the closed-loop system \eqref{eq:cl_dyn} satisfies the \ref{eq:expect_cbf_cond} condition and the bounded-jump condition \eqref{assp:thm_main_bounded_diff} (a.s) for some $\alpha \in [0,1) $ and $\delta > 0 $,
    then $h(\mb{x}_{k}) \geq \alpha^k h(\mb{x}_0) - \sum_{i=0}^{k-1}\alpha^{i} \delta  $ for all $k\geq 0 $ and $\C_\delta = \left\{ \mb{x} \in \R^n  ~ |~ h(\mb{x}) \geq \frac{-\delta}{1-\alpha} \right\} $ is safe (a.s.). 
\end{proposition}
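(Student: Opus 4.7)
The plan is to peel off the ``expectation'' aspect by combining the two hypotheses into a single pathwise (almost sure) recursion, and then unroll that recursion by induction to get the explicit lower bound. First, I would rewrite the bounded-jump assumption \eqref{assp:thm_main_bounded_diff} as $h(\mb{x}_{k+1}) \geq \E[h(\mb{x}_{k+1}) \mid \F_k] - \delta$ a.s., then chain it with the \eqref{eq:expect_cbf_cond} condition $\E[h(\mb{x}_{k+1}) \mid \F_k] \geq \alpha h(\mb{x}_k)$ to conclude
\begin{equation*}
    h(\mb{x}_{k+1}) \geq \alpha h(\mb{x}_k) - \delta \qquad \text{(a.s.)}
\end{equation*}
for every $k \geq 0$. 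This is the key one-step inequality; the rest is bookkeeping.

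Next, I would prove the claimed bound $h(\mb{x}_k) \geq \alpha^k h(\mb{x}_0) - \sum_{i=0}^{k-1} \alpha^i \delta$ by induction on $k$. The base case $k=0$ is immediate (empty sum). For the inductive step, apply the one-step recursion and the induction hypothesis:
\begin{equation*}
    h(\mb{x}_{k+1}) \geq \alpha h(\mb{x}_k) - \delta \geq \alpha^{k+1} h(\mb{x}_0) - \sum_{i=0}^{k-1} \alpha^{i+1} \delta - \delta = \alpha^{k+1} h(\mb{x}_0) - \sum_{i=0}^{k} \alpha^{i} \delta,
\end{equation*}
completing the induction almost surely.

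Finally, for the forward invariance of $\C_\delta$, suppose $\mb{x}_0 \in \C_\delta$, i.e.\ $h(\mb{x}_0) \geq -\delta/(1-\alpha)$. Plug into the induction bound and use the geometric series $\sum_{i=0}^{k-1}\alpha^i = (1-\alpha^k)/(1-\alpha)$:
\begin{equation*}
    h(\mb{x}_k) \geq \alpha^k h(\mb{x}_0) - \delta \frac{1 - \alpha^k}{1-\alpha} \geq \alpha^k \frac{-\delta}{1-\alpha} - \delta \frac{1-\alpha^k}{1-\alpha} = \frac{-\delta}{1-\alpha},
\end{equation*}
so $\mb{x}_k \in \C_\delta$ a.s.\ for all $k \geq 0$.

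There is no real obstacle here: the only subtlety is recognizing that \eqref{assp:thm_main_bounded_diff} is \emph{one-sided} and flips in exactly the right direction to combine with \eqref{eq:expect_cbf_cond} to eliminate the conditional expectation. The cleanness of the geometric-series cancellation then makes the invariance of $\C_\delta$ essentially automatic. Since both inequalities hold almost surely, the conclusion holds almost surely as well (after a standard intersection of a countable family of full-measure events, one per $k$).
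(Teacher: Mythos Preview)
Your proposal is correct and follows essentially the same approach as the paper: combine \eqref{assp:thm_main_bounded_diff} and \eqref{eq:expect_cbf_cond} into the pathwise recursion $h(\mb{x}_{k+1}) \geq \alpha h(\mb{x}_k) - \delta$ (a.s.), then unroll. The only cosmetic difference is that the paper establishes invariance of $\C_\delta$ directly from the one-step recursion (showing $h(\mb{x}_k)\geq -\delta/(1-\alpha)\implies h(\mb{x}_{k+1})\geq -\delta/(1-\alpha)$) rather than via the explicit $k$-step bound and geometric series, but this is the same argument with less bookkeeping.
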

\begin{proof}
    By combining the bounds \eqref{eq:expect_cbf_cond} and \eqref{assp:thm_main_bounded_diff}: 
    \begin{align}
        h(\mb{x}_{k+1}) \geq \E[h(\mb{x}_{k+1}) ~|~ \F_{k}] - \delta \geq \alpha h(\mb{x}_k) - \delta \textup{ (a.s.)}
    \end{align}
    \noindent Thus, for all $k \in \mathbb{Z}$, we have the lower-bound $h(\mb{x}_{k}) \geq \alpha^k h(\mb{x}_0) - \sum_{i=0}^{k-1}\alpha^{i} \delta  $ (a.s). Furthermore, for all time, $h(\mb{x}_k) \geq  \frac{-\delta}{1-\alpha} \implies h(\mb{x}_{k+1})  \geq  \frac{-\delta}{1 - \alpha}  $, so 
    $\C_\delta $ is safe (a.s.). 
\end{proof}


To compare with ISSf's worst-case safe set $\C_{\delta}$, we wish to use Thm. \ref{thm:main} to bound the probability that our system leaves some expanded safe set $\mathcal{C}_\epsilon = \{ \mb{x} \in \R^n | h(\mb{x}) \geq -\epsilon\} $ with $\epsilon \geq 0 $ in finite time.
\begin{corollary} \label{cor:issf}
  If the hypotheses of Theorem \ref{thm:main} are satisfied and \eqref{eq:cl_dyn} satisfies the \ref{eq:expect_cbf_cond} for some $\alpha \in (0,1)$, then for any value $\epsilon \geq 0 $ and any $K \in \mathbb{N}_1$, 
    \begin{align}
        P & \bigg\{ \min_{k \leq K} \;  h(\mb{x}_k)     < -\epsilon \bigg\} \label{eq:p_issf_bound} \\
         & \leq \textstyle H \left(\lambda , \frac{\sigma}{\delta} \left( \frac{1 - \alpha^{2K}}{1 - \alpha^2} \right)^{\frac{1}{2}} \right) \mathds{1}_{\left\{ -\epsilon \geq \alpha^K h(\mb{x}_0) - \sum_{i=0}^{K-1}\alpha^i \delta \right\} }\nonumber 
    \end{align}
  where  $\lambda  = \frac{\alpha^K}{\delta}(h(\mb{x}_0) + \epsilon)$.
\end{corollary}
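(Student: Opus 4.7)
The plan is to reduce the claim to Theorem~\ref{thm:main} via a translation of the safety function. Define $\tilde{h}(\mb{x}) \triangleq h(\mb{x}) + \epsilon$ so that the event $\{\min_{k \leq K} h(\mb{x}_k) < -\epsilon\}$ coincides exactly with $\{\min_{k \leq K} \tilde{h}(\mb{x}_k) < 0\}$, the standard ``drop below zero'' event for the $0$-superlevel set of $\tilde{h}$ (which is $\C_\epsilon$).

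First, I would verify that $\tilde{h}$ inherits the three hypotheses of Theorem~\ref{thm:main} with the same constants $\alpha$, $\delta$, $\sigma$. The DTCBF condition for $h$ yields $\E[\tilde{h}(\mb{x}_{k+1}) \mid \F_k] = \E[h(\mb{x}_{k+1}) \mid \F_k] + \epsilon \geq \alpha h(\mb{x}_k) + \epsilon = \alpha \tilde{h}(\mb{x}_k) + (1-\alpha)\epsilon \geq \alpha \tilde{h}(\mb{x}_k)$, so $\tilde{h}$ is also a DTCBF with the same rate. Since the shift $\tilde{h} - h = \epsilon$ is deterministic, both assumption~\eqref{assp:thm_main_bounded_diff} and assumption~\eqref{assp:thm_main_pqv} transfer from $h$ to $\tilde{h}$ unchanged.

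Second, I would apply the proof of Theorem~\ref{thm:main} to $\tilde{h}$ to obtain the first argument of $H$, namely $\alpha^K \tilde{h}(\mb{x}_0)/\delta = \alpha^K(h(\mb{x}_0) + \epsilon)/\delta$. The sharper second argument $\tfrac{\sigma}{\delta}\sqrt{(1-\alpha^{2K})/(1-\alpha^2)}$ comes from stopping one step earlier in the PQV estimate~\eqref{eq:tighter_pqv}: using $\alpha \in (0,1)$, the geometric sum $\sum_{i=1}^K \alpha^{2(K-i)} = (1-\alpha^{2K})/(1-\alpha^2)$ replaces the looser upper bound $K$. Plugging this tighter $\xi^2$ into Freedman's inequality (Theorem~\ref{thm:azuma_hoeffding}) yields the stated bound.

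Third, I would justify the indicator factor using ISSf (Prop.~\ref{thm:issf}): under the DTCBF and bounded-error hypotheses, $h(\mb{x}_k) \geq \alpha^k h(\mb{x}_0) - \sum_{i=0}^{k-1} \alpha^i \delta$ almost surely. For $h(\mb{x}_0) \geq 0$ and $\alpha \in (0,1)$ this lower bound is non-increasing in $k$, so its minimum over $k \leq K$ is attained at $k = K$. Consequently, whenever $-\epsilon < \alpha^K h(\mb{x}_0) - \sum_{i=0}^{K-1}\alpha^i \delta$ the almost-sure ISSf bound strictly exceeds $-\epsilon$, so the event in question has probability zero---precisely the complement of the indicator's support. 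The main obstacle is careful bookkeeping: confirming that the translation preserves all three hypotheses with the same constants, and noting that the geometric-sum simplification requires $\alpha < 1$, which explains why the Corollary excludes the $\alpha = 1$ ($c$-martingale) case covered by Theorem~\ref{thm:main}.
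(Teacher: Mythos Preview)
Your proposal is correct and follows essentially the same route as the paper: translate $h$ by $\epsilon$, verify the shifted function still satisfies the DTCBF condition with the same $\alpha$ (via the identical computation $\E[h(\mb{x}_{k+1})+\epsilon\mid\F_k]\geq\alpha(h(\mb{x}_k)+\epsilon)+(1-\alpha)\epsilon\geq\alpha(h(\mb{x}_k)+\epsilon)$) and that \eqref{assp:thm_main_bounded_diff}--\eqref{assp:thm_main_pqv} carry over unchanged, then re-run the Theorem~\ref{thm:main} argument with the tighter geometric-series PQV bound from \eqref{eq:tighter_pqv} and append the ISSf almost-sure lower bound from Prop.~\ref{thm:issf} for the indicator. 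Your explicit monotonicity argument for the ISSf lower bound (that $\alpha^k h(\mb{x}_0)-\sum_{i=0}^{k-1}\alpha^i\delta$ is non-increasing in $k$ when $h(\mb{x}_0)\geq 0$) is a detail the paper leaves implicit, but otherwise the two proofs coincide.
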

\begin{proof}
    The \ref{eq:expect_cbf_cond} condition ensures that, for any $\epsilon \geq 0$: 
    \begin{align}
        \mathbb{E}[~h(\mb{x}_{k+1}) +  \epsilon ~|~ \F_{k}~] & \geq  \alpha (h(\mb{x}_k) + \epsilon ) + \epsilon (1-\alpha) \\
        & \geq \alpha (h(\mb{x}_k) + \epsilon) 
    \end{align}
     We apply the same proof as Thm. \ref{thm:main} starting at \eqref{eq:cand_mart} with ($\eta(\mb{x}_k) = \frac{h(\mb{x}_k) + \epsilon}{\delta}, \Tilde{\alpha}= \alpha, \Tilde{c} = 0 $).  Choosing $\lambda = \alpha^K \eta(\mb{x}_0) $ and bounding\footnote{This bound on $\langle M \rangle_K$ uses the finite geometric series identity and can also be applied for a tighter Thm. \ref{thm:main} and Prop. \ref{thm:comparison}.} $\langle M \rangle_K \leq \sum_{i=1}^K \alpha^{2(K - i )} \frac{\sigma^2}{\delta^2}=\frac{\sigma^2(1-\alpha^{2K})}{\delta^2(1-\alpha^2)}$ as in \eqref{eq:tighter_pqv} yields the desired bound without the indicator function by applying Thm. \ref{thm:azuma_hoeffding}. The indicator function is a result of applying the lower bound on the safety value from Prop. \ref{cor:issf}, i.e. $h(\mb{x}_k) \geq \alpha^k h(\mb{x}_0) - \sum_{i=0}^{k-1}\alpha^i \delta $ (a.s.) for $k \in \mathbb{Z}$.  
\end{proof}



A comparison of Prop. \ref{cor:issf} and Cor. \ref{thm:issf} and Monte Carlo approximations for various $\epsilon$ and \ifthenelse{\boolean{long}}{a variety of distributions}{distributions  (truncated normal, categorical)}\footnote{Code for these simulations can be found at \cite{codebase} } is shown in Fig. \ref{fig:issf_compare} \revised{where we can see that our method successfully upper-bounds the sampled safety probabilities with risk-sensitive guarantees that are much less conservative than the worst-case bounds provided by ISSf.}

\ifthenelse{\boolean{long}}{For these simulations, we use the simple system: 
\begin{align}
    \mb{x}_{k+1} = \alpha \mb{x}_k + \mb{d}_k
\end{align}
for $\mb{x}\in \R^1$, $\alpha =0.99$, and zero-mean disturbances $\mb{d}_k$ sampled from a variety of distributions for up to $K =400$ steps. This system naturally satisfies the \ref{eq:expect_cbf_cond} constraint: 
\begin{align}
    \E[h(\mb{x}_{k+1}) | \F_k] \geq \alpha h(\mb{x}_k) \textup{ with } h(\mb{x}) = \mb{x},
\end{align}
so we seek to provide guarantees of its inherent safety probabilities. 
    In particular, in three different experiments we consider $\mb{d}_k$ sampled from one of three zero-mean distributions that all satisfy $\vert \mb{d} \vert \leq 1 $ and $\sigma \leq \frac{1}{3}$: a uniform distribution $\mathcal{U}_{[-1,1]}$, a standard normal distribution truncated at $-1$ and $1$, and a categorical distribution where $\mathbb{P}\{\mb{d} = -1  \} = \frac{1}{6} $ and $\mathbb{P}\{ \mb{d}  = \frac{1}{5} \} = \frac{5}{6}$ to ensure 0 mean. 

    These simulations show that although our method is conservative compared to the Monte-Carlo approximations, it provides useful risk-based safety probabilities for a variety of $\C_\epsilon $ level sets whereas ISSf only provides a worst-case almost-surely bound.  
}{}

\ifthenelse{\boolean{long}}{
    }{
        \vspace{-0.2em}
    }
\section{Case Study: Bipedal Obstacle Avoidance} \label{example:hlip}

In this section we apply our method to a simplified model of a bipedal walking robot. In particular, the Hybrid Linear Inverted Pendulum (HLIP) model \cite{xiong20223} approximates a bipedal robot as an inverted pendulum with a fixed center of mass (COM) height $z_0 \in \R_{>0}$. Its states are the planar position, relative COM-to-stance foot position, and COM velocity $\mb{p},  \mb{c}, \mb{v} \in \R^2 $. The step-to-step dynamics are linear and the input is the relative foot placement, $\mb{u}_k \in \R^2$. The matrices $\mb{A} \in \R^{6\times 6}$ and $\mb{B}\in \R^{6\times2} $ are determined by $z_0$ and gait parameters including the stance and swing phase periods. The HLIP model with \revised{an added disturbance matrix $\mathbf{D}\in \R^{6 \times 4}$ and disturbance $\mb{d}\in \R^4$ affecting position and velocity} is: 
\begin{align}
    \mb{x}_{k+1}= \mb{A} \mb{x}_{k} + \mb{B}\mb{u}_k + \mb{D} \mb{d}_k, \quad \mb{d}_k \sim \mathcal{D}. \nonumber
\end{align}
where $\mb{x}_{k} = \lmat \mb{p}_{k}^\top& \mb{c}_{k}^\top &
    \mb{v}_{k}^\top
    \rmat^\top$. We augment the standard HLIP model and assume that $\mb{d}$ enters linearly and $\cal{D}$ is a $4$-dimensional, $0$-mean uniform distribution\footnote{\ifthenelse{\boolean{long}}{
            See Appx. \ref{appx:construct_delta_example}
        }{
            See \cite[Appx. H]{cosner2024bounding} 
        }
        for bounds for $\delta$ and $\sigma$ given this problem structure. 
        } with $\Vert \mb{d} \Vert \leq d_\textup{max}$. 

 We define safety for this system as avoiding a circular obstacle of radius $r> 0$ located at $(x,y) = \bs{\rho} \in \R^2$, so safety can be defined using the signed-distance function $
     h(\mb{x}) = \Vert \mb{p} - \bs{\rho} \Vert_2 - r$. Notably, this function has no upper bound and therefore the Ville's-based Thm. \ref{thm:rss} does not apply. 

Since $h(\mb{x})$ is not convex, we use a conservative halfspace convexification instead: 
\begin{align}
    h(\mb{x}_{k+1}) \geq\widehat{\mb{e}}(\mb{p}_k) ^\top\left( \mb{p}_{k+1}   - \bs{\rho} \right) - r \triangleq \bar{h}(\mb{x}_{k+1}), 
\end{align}
\noindent where $\widehat{\mb{e}}(\mb{p}) = \frac{(\mb{p}- \bs{\rho})}{\Vert \mb{p}  - \bs{\rho} \Vert }$ and we apply the controller: 
\begin{align}
    \mb{u}^* = \min_{\mb{u} \in \R^2} & \quad \Vert \mb{u} - \mb{k}_\textup{nom}(\mb{x}_k) \Vert\\
    \textup{s.t. } & \quad \mathbb{E}\left[~\bar{h}(\mb{x}_{k+1})  ~|~ \F_k~\right] \geq \alpha \bar{h}(\mb{x}_k) \nonumber 
\end{align}
\noindent with $\alpha \in (0,1]$ and  where $\mb{k}_\textup{nom}$ tracks a desired velocity. 

We ran 5000 trials with 3 steps per second and compared against the theoretical bound from Thm. \ref{thm:main}. Those values and planar pose trajectories can be seen in Fig. \ref{fig:hlip}. Exact values and code for this and all other plots can be found in \cite{codebase}.

\ifthenelse{\boolean{long}}{
    }{
        \vspace{-1.0em}
    }

\section{Conclusion}

Despite the relative tightness guarantee of Prop. \ref{thm:comparison}, the probability guarantees of our method are not necessarily tight, as can be seen in Fig. \ref{fig:issf_compare}. Optimization of $h$ without changing $\C$ as in \cite{steinhardt_finite-time_2012} is a promising direction further tightening. 
Additionally, the case study shown in Section \ref{example:hlip} presents an immediate direction for future work which may involving a hardware demonstration of this method.

\ifthenelse{\boolean{long}}{
    }{
        \vspace{-0.5em}
    }

\appendix 

\ifthenelse{\boolean{long}}{
    \subsection{Proof of Ville's Inequality}    \label{appdx:villes}
    
\begin{proof}
    Fix $\lambda>0$ and define the stopping time $\tau \triangleq \inf \{k\in \mathbb{N} ~|~  \mart_k > \lambda  \} $ with $\tau = +\infty$ if $\mart_k \leq \lambda $ for all time. Since $\mart_k$ is a nonnegative supermartingale, the stopped process $\mart_{k \wedge \tau}$ is also a nonnegative supermartingale where
    \begin{align}
        \E[\mart_{k \wedge \tau }] \leq \E[\mart_0] \textup{ and } \liminf_{k \to \infty } \E [\mart_{k \wedge \tau}] \leq \E[\mart_0] .
    \end{align}
    We can further bound this in the case that $\tau $ is finite: 
    \begin{align}
        \E[\mart_0] & \geq \liminf_{k \to \infty} \E[\mart_{k \wedge \tau} \mathds{1}_{\{\tau < \infty \} }]\\
        & \geq \E [\liminf_{k \to \infty} \mart_{k \wedge \tau} \mathds{1}_{\{\tau < \infty \} }  ]\\
        & > \E[\lambda \mathds{1}_{\tau < \infty } ] = \lambda \P\{\tau < \infty  \} = \lambda \P\left\{\sup_{k\in \mathbb{N} } W_k > \lambda \right\} \nonumber. 
    \end{align}
    The first inequality is by the nonegativity of $W_k$, the second inequality is by Fatou's Lemma \cite{grimmett_probability_2020}, and the third is by the definition of $\tau$. 
    Rearranging terms completes the proof.
\end{proof}

    \subsection{Proof of Theorem \ref{thm:rss}} \label{appdx:rss_and_stein}
    \begin{proof}
    We prove the two cases separately:
    
    \begin{itemize}
        \item We first prove the case when \eqref{eq:expect_cbf_cond} is satisfied.
    Let $\mart_k \triangleq B\alpha^{-K} -\alpha^{-k} h(\mb{x}_k) $. This is a nonnegative supermartingale for $k \leq K $: 
    \begin{align}
        \mart_k   = \alpha^{-K}B &  - \alpha^{-k}h(\mb{x}_k) \geq \alpha^{-k}(B - h(\mb{x}_k)) \geq 0  \nonumber \\ 
        \E[\mart_{k+1}|\F_k] &= \alpha^{-K}B - \alpha^{-(k+1)}\E[h(\mb{x}_{k+1})|\F_{k}] \nonumber   \\
        & \leq \alpha^{-K}B - \alpha^{-k}h(\mb{x}_k) = \mart_k. 
    \end{align}
    Apply Ville's inequality \eqref{thm:villes} to $\mart_k$ to find: 
    \begin{align}
        \P\left\{ \max_{k \leq K } \mart_k > \lambda  \right\} \leq \frac{\E[\mart_0]}{\lambda}.  
    \end{align}
    Next note that the implication: 
    \begin{align}
        \exists k \leq K \textup{ s.t. } h(\mb{x}_k) < 0 \implies \exists k \leq K \textup{ s.t. } \mart_k > \alpha^{-K}B \nonumber 
    \end{align}
    ensures that $P_u(K, \mb{x}_0)  \leq \mathbb{P} \left\{ \max_{k \leq K } \mart_k > \alpha^{-K} \right\} $. Choose $\lambda = \alpha^{-K} B$ to achieve: 
    \begin{align}
        P_u(K, \mb{x}_0) \leq \frac{ \alpha^{-K}B- h(\mb{x}_0) }{\alpha^{-K}B} = 1 - \frac{h(\mb{x}_0)}{B}\alpha^K
    \end{align}

        \item     Next we prove the case when \eqref{eq:c_mart_cond} is satisfied.
            Let $\mart^c_k \triangleq B-h(\mb{x}_k) + (K - k) c $. This is a non-negative supermartingale for $k \leq K$: 
            \begin{align}
            \mart^c_k &  = B - h(\mb{x}_k)  + (K - k)c \geq 0  \\ 
                \E[ \mart^c_{k+1} ~|~ \F_k] & = B-\E[h(\mb{x}_{k+1}) ~|~ \F_{k}] + (K - k-1)c \nonumber \\
                & \leq B-h(\mb{x}_k) + c +(K - k -1) c \\
                & = B -h(\mb{x}_k) + (K - k) c = \mart^c_k
            \end{align}
            Apply Ville's inequality \eqref{thm:villes} to $\mart^c_k$ to find: 
            \begin{align}
                \P\left\{ \max_{k\leq K } \mart^c_k > \lambda  \right\} \leq \frac{\E [\mart^c_0]}{\lambda } 
            \end{align}
            Next note that the implication: 
            \begin{align}
                \exists k \leq K \textup{ s.t. } h(\mb{x}_k) < 0 \implies \exists k \leq K \textup{ s.t. } \mart^c_k > B  \nonumber 
            \end{align}
            ensure that $P_u(K, \mb{x}_0)  \leq \mathbb{P}\{ \max_{k \leq K } \mart^c_k > \lambda \}$. 
            Choose $\lambda = M $ to achieve: 
            \begin{align}
                P_u(K, \mb{x}_0) \leq \frac{B - h(\mb{x}_0) + K c}{B } = 1 - \frac{h(\mb{x}_0) - Kc}{B}. \nonumber
            \end{align}

    \end{itemize}
       
    \end{proof}

    \subsection{\revised{Ville's-based Safety Theorems from  \cite{santoyo_barrier_2021} and \cite{steinhardt_finite-time_2012}}} \label{appx:santoro}
    \revised{To show how Prop. \ref{thm:comparison} can be used to compare with existing literature, we restate \cite[Thm. 2]{santoyo_barrier_2021} which contains \cite[Thm, 2.3]{steinhardt_finite-time_2012}. In particular, using the transformation, $h(\mb{x}) = B( 1 - b_s(\mb{x}))$ where $b_s(\mb{x})$ is the relevant safety function from \cite{santoyo_barrier_2021}, \cite[Thm. 2]{santoyo_barrier_2021} can be rewritten as: }
    \revised{
    \begin{theorem}[{\cite[Thm. 2]{santoyo_barrier_2021}}]
        Given the closed-loop dynamics \eqref{eq:cl_dyn}
        and the sets $\mathcal{X} \subset \R^n$, $\mathcal{X}_0 \subset \textup{Int}(\mathcal{C})$. Suppose there exists a twice differentiable function $h$ such that
        \begin{align}
            h(\mb{x}) & \leq B, \quad \forall \mb{x} \in \mathcal{X},  \\
            \E [h(\mb{F}(\mb{x}_k)) ~|~ \mb{x}_k] & \leq \alpha h(\mb{x}_k) - c, \quad \forall \mb{x} \in \mathcal{C}
        \end{align}
        for some $\alpha \in (0,1] , c \in [B(\alpha-1), \alpha) $, $\gamma \in [B, 0)$.
        Then,
        \begin{align}
            & \textup{if $\alpha \in (0,1]$ and $c \leq 0$, } \textstyle P_u(\mb{x}_0) \leq 1 - \frac{h(\mb{x}_0)}{B} \left( \frac{\alpha B - c}{ B }\right)^{K} \label{eq:santoyo_safe_disturbance} \\
            & \textup{if $\alpha \in (0,1]$ and $c > 0$, } \nonumber \\
            & \quad \quad\quad \quad \quad\quad  \textstyle P_u(\mb{x}_0, K) \leq 1 - \frac{\alpha^K h(\mb{x}_0)}{(1 - \alpha)} \left(\frac{c + (1 - \alpha) B}{B} \right) \label{eq:santoyo_rss}\\
            & \textup{if $\alpha \in (0,1]$ and $c = 0$, } \displaystyle P_u(\mb{x}_0, K) \leq 1 - \frac{\alpha^K h(\mb{x}_0)}{B} \label{eq:santoyo_dtcbf}\\
            &  \textup{if $\alpha = 1$, }\displaystyle P_u(\mb{x}_0, K) \leq 1 - \frac{h(\mb{x}_0) - cK}{B} \label{eq:santoyo_c_mart}.
        \end{align}
    \end{theorem}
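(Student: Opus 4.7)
The plan is to prove each of the four cases of the theorem by constructing, from $h$, a nonnegative supermartingale $W_k$ adapted to $\{\F_k\}$ to which Ville's inequality (Lemma \ref{thm:villes}) can be applied directly, exactly mirroring the template used for Theorem \ref{thm:rss} (see Appendix \ref{appdx:rss_and_stein}). In each case the construction will be an affine-in-$h$ transformation that (i) remains nonnegative on $\mathcal{X}$ thanks to the bound $h(\mb{x})\leq B$, (ii) absorbs the factor $\alpha$ via exponential discounting $\alpha^{-k}$, and (iii) absorbs the additive offset $c$ via a deterministic arithmetic/geometric ``tail sum'' $c\sum_{i=k+1}^{K}\alpha^{-i}$ (or $c(K-k)$ when $\alpha=1$), chosen so that the expected decrement produced by the hypothesis on $\E[h(\mb{F}(\mb{x}_k))\mid\mb{x}_k]$ cancels the forward shift of the deterministic term.

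First, for the clean cases $c=0$ and $\alpha=1$ I would simply reuse the two supermartingales from Appendix \ref{appdx:rss_and_stein}: $W_k \triangleq B\alpha^{-K}-\alpha^{-k}h(\mb{x}_k)$ for the DTCBF case \eqref{eq:santoyo_dtcbf}, and $W_k^c \triangleq B-h(\mb{x}_k)+(K-k)c$ for the $c$-martingale case \eqref{eq:santoyo_c_mart}. The nonnegativity check follows from $h\leq B$ and $\alpha\leq1$; the supermartingale check is the one-line calculation that substitutes the hypothesized conditional bound on $\E[h(\mb{F}(\mb{x}_k))\mid\mb{x}_k]$; and the unsafe event $\{\exists k\leq K:\,h(\mb{x}_k)<0\}$ is contained in $\{\sup_{k\leq K}W_k>B\alpha^{-K}\}$ (resp.\ $\{\sup_{k\leq K} W_k^c>B\}$), at which point Ville yields the stated bound after division.

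Second, for the two mixed cases (\eqref{eq:santoyo_safe_disturbance} with $c\leq 0$ and \eqref{eq:santoyo_rss} with $c>0$), my construction is
\begin{equation*}
W_k \triangleq B\alpha^{-K}-\alpha^{-k}h(\mb{x}_k) + c\sum_{i=k+1}^{K}\alpha^{-i},
\end{equation*}
where the empty-sum convention gives $W_K=B\alpha^{-K}-\alpha^{-K}h(\mb{x}_K)$. The supermartingale calculation is direct: applying the hypothesized bound on $\E[h(\mb{F}(\mb{x}_k))\mid\mb{x}_k]$ exactly recovers the prior term $c\alpha^{-(k+1)}$ that restores the tail sum back to its value at time $k$. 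For nonnegativity, when $c\geq 0$ both summands are nonnegative on $\mathcal{X}$; when $c<0$ the tail sum is negative, so I would pick up the slack from the first summand by using $B\alpha^{-K}-\alpha^{-k}h(\mb{x}_k)\geq \alpha^{-k}(B-h(\mb{x}_k))\geq 0$ together with an explicit geometric-series estimate of $|c|\sum_{i=k+1}^{K}\alpha^{-i}$; verifying that this is dominated by $\alpha^{-k}(B-h(\mb{x}_k))$ under the stated range $c\in[B(\alpha-1),\alpha)$ is the crux of this case. Applying Ville with $\lambda=B\alpha^{-K}$ (the value such that $h(\mb{x}_k)<0$ at any $k\leq K$ forces $W_k>\lambda$) and using the closed-form geometric identity $\sum_{i=1}^{K}\alpha^{-i}=(\alpha^{-K}-1)/(1-\alpha)$ (or $K$ when $\alpha=1$) to simplify $\E[W_0]/\lambda$ yields the two stated expressions after algebraic rearrangement.

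The main obstacle I anticipate is the bookkeeping in this last step: verifying that the geometric-series simplification of $\E[W_0]/\lambda$ collapses to precisely the forms \eqref{eq:santoyo_safe_disturbance} and \eqref{eq:santoyo_rss} for the appropriate signs of $c$, and checking that the hypotheses $c\in[B(\alpha-1),\alpha)$ are exactly what is needed to guarantee nonnegativity of $W_k$ on $\mathcal{X}$ in the $c<0$ subcase (so that Ville's inequality is applicable). A secondary subtlety is that the stated hypothesis uses the inequality direction $\E[h(\mb{F}(\mb{x}_k))\mid\mb{x}_k]\leq \alpha h(\mb{x}_k)-c$; under this orientation the candidate $W_k$ above is a \emph{submartingale}, so I would either negate the transformation (replacing $-\alpha^{-k}h(\mb{x}_k)$ by $+\alpha^{-k}h(\mb{x}_k)$ and shifting the offset) or equivalently reinterpret the theorem under the standard safety convention by noting, as the surrounding text does, that $h=B(1-b_s)$ flips the inequality direction, recovering the usual supermartingale setup.
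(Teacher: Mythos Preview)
The paper does not actually prove this theorem: it appears in Appendix~\ref{appx:santoro} purely as a restatement of \cite[Thm.~2]{santoyo_barrier_2021} (after the change of variables $h=B(1-b_s)$), with the proof deferred to the cited reference. The only portions the paper proves directly are the two special cases $c=0$ and $\alpha=1$, which are exactly Theorem~\ref{thm:rss} with proof in Appendix~\ref{appdx:rss_and_stein}; your treatment of those two cases is identical to the paper's.

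For the mixed cases your Ville-based construction
\[
W_k=B\alpha^{-K}-\alpha^{-k}h(\mb{x}_k)+c\sum_{i=k+1}^{K}\alpha^{-i}
\]
is the correct generalization and matches the argument in \cite{santoyo_barrier_2021}: the supermartingale check and the containment of the unsafe event in $\{W_k>B\alpha^{-K}\}$ go through exactly as you outline. You also correctly flag the sign issue---as written in the paper the hypothesis carries a $\leq$, which is a transcription artifact of the $h=B(1-b_s)$ substitution; the safety conclusions require the $\geq$ orientation, and your proposed fix (work with the flipped inequality, or equivalently with $b_s$) is the right one. The only caveat is that the displayed forms \eqref{eq:santoyo_safe_disturbance}--\eqref{eq:santoyo_rss} in this appendix appear to contain further transcription typos (e.g.\ \eqref{eq:santoyo_rss} does not reduce to \eqref{eq:santoyo_dtcbf} as $c\to0$), so do not expect your clean geometric-series expression $1-\alpha^{K}h(\mb{x}_0)/B + c(1-\alpha^{K})/\bigl(B(1-\alpha)\bigr)$ to match them symbol-for-symbol; the discrepancy is in the restatement, not in your argument.
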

    One minor difference between these methods is that we consider a state $\mb{x}$ to be safe if $\mathbf{x}\in \mathcal{C}$ and in \cite{santoyo_barrier_2021} they consider a state to be safe if it is in $\textup{Int}(\mathcal{C})$. Ultimately, this makes very little difference for the bound because $\mb{x} \notin \mathcal{C} \implies \mb{x} \notin \textup{Int}(\mathcal{C})$ and because $\mathcal{C} \setminus \textup{Int}(\mathcal{C})$ is generally a set of zero measure. 
    }



    \subsection{\revised{Proof of Freedman's Inequality, (Theorem \ref{thm:azuma_hoeffding})}}\label{appdx:freedman}
    
 \revised{ Here we represent the proof as given in \cite{freedman1975tail} for reference. }

    \revised{First, Freedman presents the following bound on the exponential function. We refer to \cite[Lem. 3.1 and Cor. 3.2]{freeman_global_1995} for the original calculus. }

    \revised{
    \begin{lemma}[{\cite[{Cor. 3.2}]{freedman1975tail}}]\label{lemma:freedman} For $\varphi(\gamma) \triangleq (e^\gamma -1 - \gamma) $
        \begin{align}
            e^{\gamma x} \leq 1 + \gamma x + x^2 \varphi(\gamma) , \textup{ for all } \gamma \geq 0, x \leq 1.  
        \end{align} 
    \end{lemma}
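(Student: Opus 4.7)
The plan is to fix $x \leq 1$ and view the claim as a one-variable inequality in $\gamma \geq 0$. Specifically, I would define
\[ g(\gamma) \triangleq 1 + \gamma x + x^2 \varphi(\gamma) - e^{\gamma x}, \]
so that the goal reduces to showing $g(\gamma) \geq 0$ on $[0,\infty)$ for every fixed $x \leq 1$.

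First I would verify the base case $g(0) = 0$ by direct substitution, using $\varphi(0) = 0$. Next, differentiating in $\gamma$ with $\varphi'(\gamma) = e^\gamma - 1$ gives $g'(\gamma) = x + x^2(e^\gamma - 1) - x e^{\gamma x}$, and hence $g'(0) = 0$ after cancellation. The crucial computation is the second derivative,
\[ g''(\gamma) = x^2 \left( e^\gamma - e^{\gamma x} \right), \]
which is nonnegative for all $\gamma \geq 0$ and $x \leq 1$: the prefactor $x^2$ is nonnegative for every real $x$, and $\gamma \geq 0$ together with $x \leq 1$ forces $\gamma x \leq \gamma$, so $e^{\gamma x} \leq e^\gamma$. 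Integrating twice from $0$, using $g(0) = g'(0) = 0$, then yields first $g'(\gamma) \geq 0$ and finally $g(\gamma) \geq 0$, which is the desired inequality.

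I do not expect a serious obstacle; the argument is pure calculus. The only subtlety worth flagging is that $x$ is allowed to be negative, which blocks naive Taylor-series comparisons such as $(\gamma x)^k \leq \gamma^k$ (those fail for large $|x|$ when $k$ is odd and $|x| > 1$). The approach above sidesteps this entirely by reducing the problem to the single clean bound $e^{\gamma x} \leq e^\gamma$, which holds without any assumption on the sign of $x$. The boundary cases $x \in \{0,1\}$ and $\gamma = 0$ reduce to equality, consistent with $g'' \equiv 0$ along those slices.
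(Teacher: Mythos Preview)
Your proof is correct. The paper does not actually prove this lemma; it merely refers the reader to Freedman's original (Lemma~3.1 and Corollary~3.2 of \cite{freedman1975tail}) for ``the original calculus,'' so there is no in-paper argument to compare against directly. Freedman's own route proceeds via his Lemma~3.1, which shows that $t \mapsto (e^{t}-1-t)/t^{2}$ (extended by $1/2$ at $t=0$) is nondecreasing on $\mathbb{R}$; the stated inequality is then the immediate corollary obtained by evaluating this monotone function at $\gamma x \leq \gamma$ and clearing the denominator $\gamma^{2}x^{2}$. Your fixed-$x$, second-derivative argument is equally elementary and arguably more self-contained, since it reduces the whole claim to the single clean bound $e^{\gamma x}\leq e^{\gamma}$ rather than requiring a separate monotonicity lemma for the ratio.
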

    }
    
    \revised{
    Freedman then uses this construction to bound the moment generating function of supermartingale differences to find the following bound from which the main theorem will follow almost immediately. 
    \begin{proposition}[{\cite[{Prop. 3.3}]{freedman1975tail}}]
        If $W_k = \sum_{i=0}^k X_k $ is a supermartingale with stopping time $\tau$ where  $ W_k - W_{k-1} \leq 1 \quad (\textup{a.e.})$ 
    for all $k \in \mathbb{N}_1$ and any $\gamma \geq 0 $, then 
    \begin{align}
        \int_{\{\tau < \infty\}} e^{\gamma W_\tau   - \varphi(\gamma)\langle W \rangle_\tau } d \mathbb{P} \leq 1 . \label{eq:prop_wts}
    \end{align}
    \end{proposition}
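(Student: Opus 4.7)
The plan is to build a non-negative supermartingale $Z_k \triangleq \exp(\gamma W_k - \varphi(\gamma)\langle W\rangle_k)$ out of the supermartingale $W_k$ and its predictable quadratic variation, and then read off the desired inequality by an optional-stopping-plus-Fatou argument. The exponential form is dictated by the integrand in \eqref{eq:prop_wts}, and the convention $W_0 = 0$ (so $Z_0 = 1$) is what produces the constant $1$ on the right-hand side.

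First I would verify that $Z_k$ is a supermartingale with respect to the underlying filtration. Writing $X_{k+1} \triangleq W_{k+1} - W_k$, the PQV increment $\langle W\rangle_{k+1} - \langle W\rangle_k = \E[X_{k+1}^2 \mid \F_k]$ is $\F_k$-measurable, so after pulling out the $\F_k$-measurable factors the task reduces to bounding $\E[e^{\gamma X_{k+1}} \mid \F_k]$ from above by $\exp(\varphi(\gamma)\E[X_{k+1}^2 \mid \F_k])$. This is exactly where Lemma \ref{lemma:freedman} enters: since $X_{k+1} \leq 1$ (a.s.) and $\gamma \geq 0$, applying the pointwise inequality $e^{\gamma x} \leq 1 + \gamma x + x^2 \varphi(\gamma)$ to $X_{k+1}$ and taking conditional expectation gives $\E[e^{\gamma X_{k+1}}\mid \F_k] \leq 1 + \gamma\E[X_{k+1}\mid \F_k] + \varphi(\gamma)\E[X_{k+1}^2\mid \F_k]$. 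The supermartingale hypothesis $\E[X_{k+1}\mid \F_k] \leq 0$ combined with $\gamma\geq 0$ drops the linear term, and the elementary inequality $1+y\leq e^y$ converts the remainder into the exponential bound that was required. Substituting back shows $\E[Z_{k+1}\mid \F_k]\leq Z_k$.

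Once $Z_k$ is a non-negative supermartingale with $Z_0 = 1$, the remainder is standard. Optional stopping at the bounded time $k\wedge\tau$ gives $\E[Z_{k\wedge\tau}] \leq \E[Z_0] = 1$ for every finite $k$. Since $Z_{k\wedge\tau} \to Z_\tau$ pathwise on $\{\tau<\infty\}$ and everything is non-negative, Fatou's lemma yields $\int_{\{\tau<\infty\}} Z_\tau\, d\P \leq \liminf_k \E[Z_{k\wedge\tau}\mathds{1}_{\{\tau<\infty\}}] \leq 1$, which is exactly \eqref{eq:prop_wts}. The main subtlety, rather than a hard obstacle, is the interplay between the one-sided increment bound $X_{k+1}\leq 1$ and the supermartingale property that together cancel the linear Taylor term; this one-sided asymmetry is precisely what lets Freedman's inequality apply to supermartingales (not just centered martingales) and is what produces the asymmetric shape of the tail bound $H(\lambda,\xi)$.
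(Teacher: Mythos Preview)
Your proposal is correct. Both you and the paper build the exponential process $Z_k=\exp(\gamma W_k-\varphi(\gamma)\langle W\rangle_k)$, show it is a nonnegative supermartingale with $Z_0=1$, and finish with optional stopping at $k\wedge\tau$ followed by Fatou on $\{\tau<\infty\}$.

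The only genuine difference is in how the conditional moment-generating-function bound is obtained. You apply Lemma~\ref{lemma:freedman} pointwise to $X_{k+1}$, take conditional expectations, drop the linear term via $\E[X_{k+1}\mid\F_k]\leq 0$ and $\gamma\geq0$, and use $1+y\leq e^y$ to obtain $\E[e^{\gamma X_{k+1}}\mid\F_k]\leq\exp(\varphi(\gamma)\E[X_{k+1}^2\mid\F_k])$. The paper instead follows Freedman's original argument: it decomposes the law of the increment as a mixture $\theta p_0+(1-\theta)p_-$ of a mean-zero part and a part supported on $(-\infty,0)$, and after several inequalities arrives at the sharper bound $\E[e^{\gamma X}]\leq\exp(\varphi(\gamma)\operatorname{Var}(X))$. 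Since the PQV increment is $\E[X_{k+1}^2\mid\F_k]\geq\operatorname{Var}(X_{k+1}\mid\F_k)$, the paper's bound is stronger than yours, but yours is exactly what is required to make $Z_k$ a supermartingale given the paper's definition of $\langle W\rangle_K$. Your route is shorter and more transparent; the paper's route yields a variance-based inequality that could be useful if one wanted a PQV defined via conditional variances rather than second moments.
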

    }
    \begin{proof}
    \revised{
        Let $p_x(x) $ be the probability distribution of a random variable $X$ with $X \leq 1$ (a.e.) and $\E[X] \leq 0 $ (a.e.). Choose probability distributions $p_0(x)$ on $(-\infty , 1]$ and $p_-(x)$ on $(-\infty, 0)$ such that 
        \begin{align}
         \mathbb{E}_0[X] &= \int_{(-\infty, \infty)} x p_0(x) dx = 0, \\
         p_x(x) &= \theta p_0(x) + (1 - \theta) p_-(x). 
        \end{align}
        Next we bound the moment generating function $\E\left[e^{\gamma X}\right]$, 
        \begin{align}
            & \E\left[e^{\gamma X}\right] =  \int_{(-\infty, 1]} e^{\gamma x} (\theta p_0(x) + (1 - \theta) p_-(x))dx \\
            & = \theta \E_0[e^{\gamma X}] + (1 - \theta) \int_{(-\infty, 0)}e^{\gamma x}p_-(x) dx   \\
            & \leq \theta \int_{(-\infty, 1] }e^{\gamma x} p_0(x) dx + (1- \theta)\label{eq:exp}\\
            & \leq \theta \int_{(\infty, 1]} (1 + \gamma x + x^2\varphi(\gamma)) p_0(x) dx + (1 - \theta) \label{eq:use_prop}\\
            & = \theta (1 + 0 + \E_0[X^2]\varphi(\gamma))  + (1 - \theta)\\
            & \leq \theta (1 + \E_0[X^2]\varphi(\gamma))  + (1 - \theta)(1 + \varphi(\gamma) \textup{Var}_-(X)) \label{eq:use_phi}\\\
            & \leq 1 + \varphi(\gamma)\big(\E[X^2] - (1 - \theta)^2\E_-[X]^2\big) \label{eq:use_theta}\\
            & =   1 + \varphi(\gamma)\big(\E[X^2] - \E[X]^2\big)  = 1 + \varphi(\gamma) \textup{Var}(X),  \label{eq:mean_def}\\
            & \leq e^{\varphi(\gamma) \textup{Var}(X)} \implies \E[e^{\gamma X - \varphi(\gamma) \textup{Var}(X)}] \leq 1  \label{eq:linear}
        \end{align}
        where \eqref{eq:exp} is attained by bounding $e^{\gamma x} \leq 1$ since $\gamma \geq 0 $ and $x\in(-\infty, 0)$ and then using the fact that $p_-$ is a probability distribution, \eqref{eq:use_prop} is attained by using of Lem. \ref{lemma:freedman}, \eqref{eq:use_phi} is attained by noting that $\varphi(\gamma)\geq 0 $ and $ \textup{Var}_-(X) \geq 0 $, \eqref{eq:use_theta} is attained by noting that $\E[X^2] = \theta \E_0[X^2] + (1 - \theta)\E_-[X^2]$ and that $(1- \theta) \in [0,1]$ and $\E_-[X]^2 \geq 0 $, \eqref{eq:mean_def} holds with equality since $\E_0[X] = 0$, and \eqref{eq:linear} holds due to the bound $ 1 + x \leq e^x$ for $ x\geq 0$. 
        }

        \revised{
        This allows us to establish that $Q_k \triangleq e^{\gamma W_k   - \varphi(\gamma)\langle W \rangle_k }$ is a supermartingale since:
        \begin{align}
            &\E[Q_{k+1}~|~\mathscr{F}_{k}] = \E[e^{\gamma W_{k+1}   - \varphi(\gamma)\langle W \rangle_{k+1} } ~|~ \mathscr{F}_{k}]\\
            & = Q_k\E\left[e^{\gamma(W_{k+1} - W_k)   - \varphi(\gamma)(\langle W \rangle_{k+1} - \langle W \rangle_{k}) }~|~ \mathscr{F}_k\right] \leq Q_k \nonumber 
        \end{align}
        which holds since  $X \triangleq W_{k+1} - W_{k}$ given $\mathscr{F}_k$ satisfies \ref{eq:linear}. 
        }

        \revised{
        Next we note that $Q_0 = 1 $ and $Q_{k\land \tau}$ is also a positive supermartingale, so 
        \begin{align}
            1 & \geq \liminf_{\tau \to \infty } \E[Q_{k \land \tau   }] \geq \liminf_{\tau \to \infty } \E[Q_{k \land \tau}  \mathds{1}_{\{\tau < \infty\}}]\\
            & \geq  \E[\liminf_{\tau \to \infty } Q_{k \land \tau} ] = \E[Q_{\tau} \mathds{1}_{\{\tau < \infty\}}],
        \end{align}
        where (as in \cite[{Proof of Thm. 2.3}]{tropp2011freedman}
        ) the indicator decreases the expectation because $Q_{k\land \tau}$ is positive, Fatou's lemma \cite{grimmett_probability_2020} justifies the third inequality, and the fact that $\tau < \infty $ for the indicator event yields the final equality which is equivalent to \eqref{eq:prop_wts} as desired.
        }
    \end{proof}

    \revised{
    \begin{theorem*}
        [{Freedman's Inequality \cite[\revised{Thm. 4.1}]{freedman1975tail}}]
    If, for some $K \in \mathbb{N}_1$ and $\xi > 0 $, $\mart_k$ is a supermartingale with $W_0=0$ such that: 
    \begin{align}
        (\mart_k - \mart_{k-1})\leq 1&  \quad \textup{ for all } k \leq K,   \tag{Assumption \ref{assp:bounded_mart_diff}}\\
        \langle \mart\rangle_K \leq \xi^2,&  \tag{Assumption \ref{assp:bounded_pred_quad_var}}
    \end{align}
     then, for any $\lambda \geq 0 $, 
    \begin{align}
        \P \left\{ \max_{k \leq K } \mart_k \geq \lambda  \right\} \leq H(\lambda, \xi) \triangleq \left( \frac{\xi^2}{\lambda + \xi^2}\right)^{\lambda + \xi^2} e^{\lambda }.   \label{eq:apx_freedman_wts}
    \end{align}
    \end{theorem*}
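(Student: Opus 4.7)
The plan is to reduce Freedman's inequality to the exponential supermartingale bound already provided by Proposition 3.3, namely
\begin{equation*}
\int_{\{\tau < \infty\}} e^{\gamma W_\tau - \varphi(\gamma)\langle W\rangle_\tau}\, d\mathbb{P} \leq 1,
\end{equation*}
valid for any stopping time $\tau$ and any $\gamma \geq 0$, where $\varphi(\gamma) = e^\gamma - 1 - \gamma$. All of the moment-generating-function machinery lives inside that proposition; the remaining work is a Chernoff-style specialization to the hitting time of the level $\lambda$, followed by a one-parameter optimization over $\gamma$.

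First I would introduce the stopping time $\tau \triangleq \inf\{k \leq K : W_k \geq \lambda\}$, with $\tau = +\infty$ on the complement. On the event $\{\max_{k\leq K} W_k \geq \lambda\}$ one has $\tau \leq K$, $W_\tau \geq \lambda$, and by monotonicity of the predictable quadratic variation, $\langle W\rangle_\tau \leq \langle W\rangle_K \leq \xi^2$, with the final bound coming from Assumption \ref{assp:bounded_pred_quad_var}. Since $\varphi(\gamma) \geq 0$ for $\gamma \geq 0$, the integrand in Proposition 3.3 is bounded below on this event by the constant $e^{\gamma \lambda - \varphi(\gamma)\xi^2}$, which gives
\begin{equation*}
e^{\gamma \lambda - \varphi(\gamma)\xi^2}\,\mathbb{P}\!\left\{\max_{k\leq K} W_k \geq \lambda\right\} \leq 1,
\end{equation*}
or equivalently $\mathbb{P}\!\left\{\max_{k\leq K} W_k \geq \lambda\right\} \leq \exp\bigl(-\gamma\lambda + (e^\gamma - 1 - \gamma)\xi^2\bigr)$ for every $\gamma \geq 0$.

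The last step is to minimize the exponent $f(\gamma) \triangleq -\gamma \lambda + (e^\gamma - 1 - \gamma)\xi^2$ over $\gamma \geq 0$. The first-order condition $f'(\gamma) = -\lambda + (e^\gamma - 1)\xi^2 = 0$ gives $\gamma^\star = \ln\bigl((\lambda + \xi^2)/\xi^2\bigr)$, which is nonnegative for any $\lambda \geq 0$, so the constraint $\gamma \geq 0$ is inactive. Substituting back and simplifying yields $f(\gamma^\star) = \lambda - (\lambda + \xi^2)\ln\bigl((\lambda + \xi^2)/\xi^2\bigr)$, and exponentiating recovers exactly $H(\lambda, \xi) = \bigl(\xi^2/(\lambda + \xi^2)\bigr)^{\lambda + \xi^2} e^\lambda$.

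Given Proposition 3.3, the argument is essentially a one-line Cram\'er-Chernoff specialization, so I do not expect a substantial obstacle; the only items requiring care are verifying that the predictable quadratic variation is nondecreasing (making $\langle W\rangle_\tau \leq \langle W\rangle_K$ legitimate on $\{\tau \leq K\}$) and confirming that restricting the integral from $\{\tau < \infty\}$ to the smaller event $\{\tau \leq K\}$ preserves the inequality by positivity of the integrand. The limiting case $\lambda = 0$ gives the trivial bound $\mathbb{P}(\cdot) \leq 1$, consistent with $H(0,\xi) = 1$, so no separate argument is needed there.
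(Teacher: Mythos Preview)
Your proposal is correct and follows essentially the same route as the paper's proof: both define the hitting time $\tau$ of the level $\lambda$, restrict the integral in Proposition~3.3 to the event $\{\tau \leq K\}$, bound the integrand using $W_\tau \geq \lambda$ and $\langle W\rangle_\tau \leq \xi^2$, and then optimize over $\gamma$ to obtain $\gamma^\star = \ln\bigl((\lambda+\xi^2)/\xi^2\bigr)$. If anything, your version is slightly more explicit about the two justifying details (monotonicity of the predictable quadratic variation and positivity of the integrand when shrinking the domain of integration) than the paper's own proof.
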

    }

    \begin{proof}
        \revised{
        Define the stopping time $\tau$ as the smallest $k\leq K$ such that $W_k \geq \lambda$, and $\tau = \infty $ if $W_k < \lambda$ for all $k \leq K$. Also define the event $A \triangleq \{ W_k \geq \lambda \textup{ and } \tau < \infty \textup{ for some } k \leq K \} $. 
        }

        \revised{
        Next, we continue by bounding using any $\gamma$: 
        \begin{align}
            1 &\geq \int_{A} \exp\{\gamma W_\tau - (e^\gamma -1 - \gamma)\langle W_k\rangle \} d \P \\
            & \geq \int_A \exp\{\gamma \lambda  - (e^\gamma -1 -\gamma) \xi^2\}d \P\\
            & = \P\{A\}   \exp\{\gamma \lambda  - (e^\gamma -1 -\gamma) \xi^2\}\\
            \implies & \P\{A\} \leq \exp\{(e^\gamma -1  - \gamma) \xi^2 - \gamma \lambda\}
        \end{align}
        From here we choose $\gamma= \ln\left( \frac{\lambda + \xi^2}{\xi^2}\right)$ to minimize this probability bound and achieve the desired result \eqref{eq:apx_freedman_wts}.  
        }
    \end{proof}

    \subsection{Proof of Proposition \ref{thm:comparison}} \label{appdx:comparison}
    
\begin{proof}
    Define $\Delta\left(\lambda, B, \sigma, K, \delta \right) \triangleq 1 - \frac{\lambda }{B } - H\left(\frac{\lambda}{\delta}, \frac{\sigma \sqrt{K}}{\delta}\right).  \label{eq:comparison_delta_fun}$ 

    If $\Delta(\lambda, B, \sigma, K, \delta) \geq 0$, then \eqref{eq:comparison_thm} must hold. We first show 
    $\Delta$ is monotonically decreasing in $\sigma^2$. Consider\ifthenelse{\boolean{long}}{\footnote{The derivation of this derivative is given after the proof.}}{\footnote{The derivation of $\frac{\partial \Delta }{\partial (\sigma^2)}$ is provided in our extended manuscript \cite{cosner2024bounding}.}}
     $\frac{\partial \Delta }{\partial (\sigma^2)} = a(\lambda, \sigma, K, \delta) b(\lambda, \sigma, K, \delta)$  
    where
    \begin{align}
        \textstyle a(\lambda, \sigma, K, \delta)  & \textstyle \triangleq  
        \frac{-e^\frac{\lambda}{\delta}}{\delta^2\sigma^2}\left( \frac{\sigma^2K}{\lambda\delta+ \sigma^2K}\right)^{\frac{( \lambda\delta + \sigma^2K )}{\delta^2}} <0, \\
        \textstyle b(\lambda, \sigma, K, \delta) & \textstyle \triangleq \left( \sigma^2K \ln \left( \frac{\sigma^2K}{\lambda \delta + \sigma^2K }\right) + \lambda\delta  \right).
    \end{align}
    The function $a(\lambda, \sigma, K, \delta)$ is negative since $\delta, \sigma, K > 0 $. For $b(\cdot)$, the logarithm bound $\ln(r) \geq 1 - 1/r $ ensures that: 
    \begin{align}
        \textstyle b(\lambda,\sigma, K, \delta) \geq \sigma^2K \left( 1 - \frac{\lambda\delta + \sigma^2K}{\sigma^2K}\right) + \lambda\delta = 0.  
    \end{align}
    Since $a<0$ and $b\geq 0 $,  $\Delta(\lambda, B, \sigma, K, \delta) $ is monotonically decreasing with respect to $\sigma^2$, so we can use the assumption $\sigma^2 K  \leq \lambda\delta$ to lower bound $\Delta$ as: 
    \begin{align}
        &\Delta(\lambda, B, \sigma, K, \delta)  \geq 
        \textstyle 1 - \frac{\lambda}{B} - \left( \frac{1}{2}\right)^{2\frac{\lambda}{\delta}}e^{\frac{\lambda}{\delta}}\\ 
        & =\textstyle  1 - \frac{\lambda}{B} - e^{(1 - 2\ln(2))\frac{\lambda}{\delta}} \triangleq 1 - \frac{\lambda}{B} - e^{-\varphi \frac{\lambda}{\delta} } \triangleq \Delta_1(\lambda, B, \delta)\nonumber 
    \end{align}
    where $\varphi  \triangleq 2 \ln(2) - 1 >0$. 
    
    Next, we show that $\Delta_1(\lambda, B, \delta)\geq 0 $ for\footnote{This interval is non-empty since $\lambda \geq 0 $ and $B\geq \lambda + \frac{\delta}{\varphi}$ implies $B \geq \frac{\delta}{\varphi}$.} $\lambda \in \left[0,B-\frac{\delta}{\varphi}\right]$. 
    We prove this by showing that $\Delta_1(\lambda, B, \delta) \geq 0 $ for $\lambda = \left\{ 0, B - \frac{\delta}{\varphi}\right\} $ and that $\Delta_1$ is concave with respect to $\lambda$. 

    \vspace{1em}
    \noindent \underline{\textbf{(1)} Nonnegativity at $\lambda = 0 $:} \; $\Delta_1(0, B, \delta)   = 0.  \quad \quad \quad \quad \quad \quad  $
    \noindent\underline{\textbf{(2)}   Nonnegativity at $\lambda = B-\frac{\delta}{\varphi}$:} 
    \begin{align}
          \textstyle\Delta_1&\textstyle\left( B -\frac{\delta}{\varphi}, B , \delta \right)  = \frac{\delta}{\varphi B}- e^{-\left(B-\frac{\delta}{\varphi}\right)\frac{\varphi}{\delta}} = \frac{\delta}{\varphi B}- e^{\left(1 - \frac{B\varphi}{\delta}\right)} \nonumber\\
          & \textstyle\geq \frac{\delta}{\varphi B} - \frac{\delta}{\varphi B} = 0, \label{ineq:exp}
    \end{align} 
    where the inequality in line \eqref{ineq:exp} is due to the previously used log inequality: $\ln(r) \geq 1 - \frac{1}{r} \iff r \geq e^{\left(1 - \frac{1}{r}\right) }$, which holds for $r = \frac{\delta}{B \varphi}>0$ since $B, \delta, \varphi > 0 $. 
    \vspace{0.5em}
    
    \noindent \underline{\textbf{(3)} Concavity for $\lambda \in [0, B-\textstyle\frac{\delta}{\varphi}]$:}
    Since $\frac{\varphi}{\delta} > 0$, the second derivative of $\Delta_1$ with respect to $\lambda$ is negative:  
    \begin{align}
    \textstyle \frac{\partial^2  \Delta_1}{\partial \lambda^2}  = -\left(\frac{\varphi}{\delta}\right)^2 e^{- \varphi\frac{\lambda}{\delta}} < 0.
    \end{align}
    Thus, $\Delta_1$ is concave with respect to $\lambda.$ 
    Since, $\Delta_1(0, B, \delta)\geq 0$, $\Delta_1\left(B-\frac{\delta}{\varphi}, B, \delta  \right) \geq 0 $, and $\Delta_1(\lambda , B, \delta)$ is concave for all $\delta > 0 $ and $B \geq \frac{\delta}{\varphi}$, it follows from the definition of concavity that $\Delta_1(\lambda , B, \delta) \geq 0 $ for all $\lambda \in \left[0, B-\frac{\delta}{\varphi}\right].$

    Using this lower bound for $\Delta_1(\lambda, B) $, we have
    $\Delta(\lambda, B, \sigma, K, \delta ) \geq \Delta_1(\lambda, B) \geq 0 $
    which implies the desired inequality \eqref{eq:comparison_thm}.
\end{proof}

    \subsection{Derivative of $\frac{\partial \Delta}{\partial (\sigma^2)}$} \ref{appdx:comparison}
Here we show the derivation of the derivative given in \eqref{eq:comparison_delta_fun}. For reference, the complete function is: 
\begin{align}
    & \Delta(\lambda, B, \sigma, K, \delta) \triangleq 1 - \frac{\lambda}{B} - \left( \frac{\sigma^2 K}{\lambda \delta + \sigma^2K}\right)^{\frac{1}{\delta^2}(\sigma^2K + \lambda \delta)}e^{\frac{\lambda}{\delta}} \nonumber 
\end{align}
with the partial derivative with respect to $\sigma^2$: 
\begin{align}
     \frac{\partial \Delta}{\partial (\sigma^2) } 
    & =   - e^{\frac{\lambda}{\delta^2}} \frac{\partial}{\partial (\sigma^2)}\left[ u(\sigma^2)^{v(\sigma)}\right]\\
    & =  - e^{\frac{\lambda}{\delta^2}}\frac{ u(\sigma^2)^{v(\sigma^2)} }{u(\sigma^2)^{v(\sigma^2)}} \frac{\partial}{\partial (\sigma^2)}\left[ u(\sigma^2)^{v(\sigma^2)}\right] \label{eq:times_1}\\
    & = - e^{\frac{\lambda}{\delta^2}} u(\sigma^2)^{v(\sigma^2)} \frac{\partial}{\partial (\sigma^2)}\left[ \ln\left(u(\sigma^2)^{v(\sigma^2)}\right)\right] \label{eq:reverse_prodrule}\\
    & =  - e^{\frac{\lambda}{\delta^2}} u(\sigma^2)^{v(\sigma^2)} \frac{\partial}{\partial (\sigma^2)}\left[ \ln\left(u(\sigma^2)\right) v(\sigma^2)\right]\label{eq:log_prop}\\
    & = - e^{\frac{\lambda}{\delta^2}} u(\sigma^2)^{v(\sigma^2)} \left[ \frac{v(\sigma^2)}{u(\sigma^2)}\frac{\partial u }{\partial (\sigma^2)} + \ln(u(\sigma^2)) \frac{\partial v}{\partial (\sigma^2)} \right]  \nonumber \\
    & =  - e^{\frac{\lambda}{\delta^2}} u(\sigma^2)^{v(\sigma^2)} \left[ \frac{\lambda}{\delta \sigma^2  } + \ln(u(\sigma^2)) \frac{K}{\delta^2} \right] \nonumber\\
    & = \underbrace{ - \frac{e^{\frac{\lambda}{\delta^2}}}{\delta^2 \sigma^2 } u(\sigma^2)^{v(\sigma^2)}}_{\triangleq a(\lambda, \sigma, K, \delta)} \underbrace{\left[ \lambda \delta  +  \sigma^2K \ln(u(\sigma^2)) \right]}_{\triangleq b(\lambda, \sigma, K, \delta)}
\end{align}
where introduce the following functions for clarity:
\begin{align}
    u(\sigma^2) \triangleq \frac{\sigma^2K}{\delta \lambda + \sigma^2 K }, && 
    v(\sigma^2) \triangleq \frac{1}{\delta^2}(\lambda \delta + \sigma^2 K).
\end{align}
Critically, this proof multiplies by 1 in line \eqref{eq:times_1} (which is well defined since $\sigma, \delta, K > 0$), then applies the product rule in reverse \eqref{eq:reverse_prodrule}, and then uses the properties of the logarithm function \eqref{eq:log_prop}. The derivation is finished by applying the product rule and rearranging terms.

    \subsection{Sufficient Conditions for Constructively Bounding $\delta$ and $\sigma$}\label{appx:construct_delta}
    Here we provide sufficient conditions for which bounds on $\delta$ and $\sigma $ in Theorem \ref{thm:main}'s assumptions \eqref{assp:thm_main_bounded_diff} and \eqref{assp:thm_main_pqv} are constructive. 
    \revised{
        \begin{proposition}
            If 
            \begin{align}
                & \mb{d}\sim \mathcal{D} \textup{ satisfies } \Vert\mb{d}\Vert\leq d_\textup{max} \textup{ for some } d_{\max} \geq 0, \\
                & h:\R^n\to \R \textup{ is globally Lipschitz with } \mathcal{L}_h\geq 0, \\
                & \mb{x}_{k+1} = \mb{F}(\mb{x}_k) + \mb{d}_k, \textup{ for some $\mb{F}: \R^n \to \R^n,$}
            \end{align}   
            then 
            \begin{align}
                \E[h(\mb{x}_k) ~|~ \mathscr{F}_{k-1} ] - h(\mb{x}_k) & \leq 2\mathcal{L}_hd_{\max}   \triangleq \delta, \\
                \textup{Var}(h(\mb{x}_{k+1})~|~ \mathscr{F}_k) & \leq  \mathcal{L}_h^2 d_{\max}^2 \triangleq \sigma^2
            \end{align}
        \end{proposition}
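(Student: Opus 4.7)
The plan is to handle both bounds by exploiting global Lipschitz continuity of $h$ together with the additive noise structure $\mb{x}_{k+1} = \mb{F}(\mb{x}_k) + \mb{d}_k$. In each case I will condition on $\mathscr{F}_{k-1}$ (respectively $\mathscr{F}_k$) so that the deterministic ``drift'' $\mb{F}(\mb{x}_{k-1})$ (respectively $\mb{F}(\mb{x}_k)$) can be treated as a known constant, leaving the noise as the only remaining source of randomness to control.

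For the first bound, I would write $\mb{x}_k = \mb{F}(\mb{x}_{k-1}) + \mb{d}_{k-1}$ and set $\mb{y} \triangleq \mb{F}(\mb{x}_{k-1})$, which is $\mathscr{F}_{k-1}$-measurable. Lipschitz continuity gives the two-sided bound $|h(\mb{y} + \mb{d}_{k-1}) - h(\mb{y})| \le \mathcal{L}_h \|\mb{d}_{k-1}\| \le \mathcal{L}_h d_{\max}$ (a.s.). Taking conditional expectations yields $\E[h(\mb{x}_k)\mid\mathscr{F}_{k-1}] \le h(\mb{y}) + \mathcal{L}_h d_{\max}$, while the lower bound on $h(\mb{x}_k)$ gives $-h(\mb{x}_k) \le -h(\mb{y}) + \mathcal{L}_h d_{\max}$. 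Adding these two inequalities immediately produces $\E[h(\mb{x}_k)\mid\mathscr{F}_{k-1}] - h(\mb{x}_k) \le 2\mathcal{L}_h d_{\max} = \delta$.

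For the variance bound, the key move is the standard identity $\textup{Var}(X\mid\mathscr{F}_k) = \min_{c} \E[(X-c)^2\mid\mathscr{F}_k]$, so I am free to choose any $\mathscr{F}_k$-measurable $c$ as a reference point. I would pick $c = h(\mb{F}(\mb{x}_k))$, which is $\mathscr{F}_k$-measurable since $\mb{x}_k$ is. Then
\begin{align}
\textup{Var}(h(\mb{x}_{k+1})\mid\mathscr{F}_k) &\le \E\bigl[(h(\mb{x}_{k+1}) - h(\mb{F}(\mb{x}_k)))^2 \mid \mathscr{F}_k\bigr] \nonumber \\
&= \E\bigl[(h(\mb{F}(\mb{x}_k)+\mb{d}_k) - h(\mb{F}(\mb{x}_k)))^2 \mid \mathscr{F}_k\bigr]. \nonumber
\end{align}
Applying the Lipschitz bound pointwise gives $(h(\mb{F}(\mb{x}_k)+\mb{d}_k) - h(\mb{F}(\mb{x}_k)))^2 \le \mathcal{L}_h^2 \|\mb{d}_k\|^2 \le \mathcal{L}_h^2 d_{\max}^2$ (a.s.), and taking conditional expectations delivers $\textup{Var}(h(\mb{x}_{k+1})\mid\mathscr{F}_k) \le \mathcal{L}_h^2 d_{\max}^2 = \sigma^2$.

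Neither step is really an obstacle, so there is no hard part in the usual sense; the only subtlety worth being careful about is the measurability bookkeeping, namely that $\mb{F}(\mb{x}_{k-1})$ is $\mathscr{F}_{k-1}$-measurable (so it factors out of the conditional expectation) and that $\mb{d}_k$ is assumed $\mathscr{F}_{k+1}$-measurable while the disturbance entering $\mb{x}_k$ is $\mb{d}_{k-1}$ which is $\mathscr{F}_k$-measurable but not necessarily $\mathscr{F}_{k-1}$-measurable, so the randomness really does survive the conditioning in the first bound. Once this is set up correctly, both inequalities follow in one line each from the Lipschitz property.
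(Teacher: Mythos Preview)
Your proposal is correct and follows essentially the same approach as the paper. For the $\delta$ bound the arguments are identical; for the $\sigma^2$ bound the paper invokes Popoviciu's inequality on the interval $[h(\mb{F}(\mb{x}_k)) - \mathcal{L}_h d_{\max},\, h(\mb{F}(\mb{x}_k)) + \mathcal{L}_h d_{\max}]$, while you instead use the variational characterization $\textup{Var}(X\mid\mathscr{F}_k)\le \E[(X-c)^2\mid\mathscr{F}_k]$ with $c=h(\mb{F}(\mb{x}_k))$ --- which is precisely the one-line proof of Popoviciu's inequality, so the two arguments coincide.
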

        \begin{proof}            
            First we bound $\delta$:
            \begin{align}
                &\E[h(\mb{x}_k) ~|~ \mathscr{F}_{k-1}] - h(\mb{x}_k) \\
                & = \mathbb{E}[h(\mb{F}(\mb{x}_{k-1}) + \mb{d}_{k-1}) ~|~\mathscr{F}_{k-1}] - h(\mb{F}(\mb{x}_{k-1}) + \mb{d}_{k-1})  \nonumber \\
                & \leq \mathbb{E}[h(\mb{F}(\mb{x}_{k-1})) + \mathcal{L}_h \Vert \mb{d}_{k-1}\Vert ~|~ \mathscr{F}_{k-1} ] \\
                & \quad \quad \quad \quad \quad \quad \quad \quad \quad \quad \quad \quad - h(\mb{F}(\mb{x}_{k-1})) + \mathcal{L}_h \Vert \mb{d}_{k-1}\Vert \nonumber   \\
                &  = h(\mb{F}(\mb{x}_{k-1})) - h(\mb{F}(\mb{x}_{k-1}))+  \mathcal{L}_h\mathbb{E}[ \Vert \mb{d}_{k-1}\Vert ]  + \mathcal{L}_h \Vert \mb{d}_{k-1}\Vert \nonumber \\
                & \leq \mathcal{L}_h\mathbb{E}[d_{\max} ] + \mathcal{L}_h d_{\max} = 2\mathcal{L}_hd_{\max}\triangleq \delta
            \end{align}
            To bound $\sigma^2$, note that boundedness of $\mathcal{D}$ and  Lipschitz continuity of $h$ implies that: 
            \begin{align}
                h(\mb{F}(\mb{x}_k)) - \mathcal{L}_h d_{\max} \leq h(\mb{F}(\mb{x}_k) + \mb{d}_k) \leq h(\mb{F}(\mb{x}_k)) + \mathcal{L}d_{\max}. \nonumber 
            \end{align}
            Thus, the distribution of $h(\mb{F}(\mb{x}_k) + \mb{d}_k)$ is bounded at $\mathscr{F}_k$, so we can use Popoviciu's inequality on variances \cite{popoviciu1935equations} to achieve: 
            \begin{align}
                \textup{Var}(h(\mb{F}(\mb{x}_k) + \mb{d} ~|~ \mathscr{F}_{k-1}) \leq \mathcal{L}_h^2d_{\max}^2 \triangleq \sigma^2
            \end{align}
        \end{proof}
    }

    \subsection{Bounding $\delta$ and $\sigma^2$ in the Example}\label{appx:construct_delta_example}

 \revised{   
 The bound (14) can be obtained for the example in Section \ref{example:hlip} by using the given assumptions that $\mathcal{D}$ is uniform on the ball of radius $d_\textup{max}$ and that the matrices $\mb{C}$ and $\mb{D}$ reflect the fact that safety is defined only with respect to position and that the global position and the center-of-mass (COM) position are coupled. These facts give $\mb{C}$ and $\mb{D}$ this structure: 
\begin{align}
    \mb{C} = \lmat 1& 0 & 0 & 0& 0 & 0 \\ 0 & 1 & 0 & 0 & 0 & 0\\ 0 & 0 & 0 & 0& 0 & 0\\ 0 & 0 & 0 & 0& 0 & 0\\ 0 & 0 & 0 & 0& 0 & 0\\ 0 & 0 & 0 & 0 & 0 & 0 \rmat, &&  \mb{D} = \lmat 1 & 0 & 0 & 0 \\
        0 & 1 & 0 & 0 \\
        1 & 0 & 0 & 0 \\
        0 & 1 & 0 & 0 \\
        0 & 0 & 1 & 0 \\
        0 & 0 & 0 & 1\rmat 
\end{align}
In this case we can calculate $\delta$ as: 
\begin{align}
    & \mathbb{E}[h(\mb{x}_k) ~|~ \mathscr{F}_{k-1}] - h(\mb{x}_k) \\
    & =     \mathbb{E}[h(\mb{C}(\mb{A}\mb{x}_{k-1} + \mb{B}\mb{u}_{k-1} + \mb{D}\mb{d}_{k-1})] ~|~ \mathscr{F}_{k-1}]\\
    & \quad\quad \quad \quad\quad \quad \quad\quad \quad  - h(\mb{C}(\mb{A}\mb{x}_{k-1} + \mb{B}\mb{u}_{k-1} + \mb{D}\mb{d}_{k-1}))\nonumber \\
    & = \mathbb{E}[\Vert \mb{C}(\mb{A}\mb{x}_{k-1} + \mb{B}\mb{u}_{k-1} + \mb{D}\mb{d}_{k-1} - \bs{\rho}\Vert] ~|~ \mathscr{F}_{k-1}] \\
    & \quad\quad \quad \quad\quad \quad - \Vert\mb{C}(\mb{A}\mb{x}_{k-1} + \mb{B}\mb{u}_{k-1} + \mb{D}\mb{d}_{k-1} - \bs{\rho}\Vert \nonumber \\
    & \leq \Vert \mb{C}(\mb{A}\mb{x}_{k-1} + \mb{B}\mb{u}_{k-1}) - \bs{\rho} \Vert +  \mathbb{E}[ \Vert \mb{C}\mb{D}\mb{d}_{k-1} \Vert]] \\
    & \quad\quad \quad \quad - \Vert\mb{C}(\mb{A}\mb{x}_{k-1} + \mb{B}\mb{u}_{k-1}  - \bs{\rho})\Vert + \Vert  \mb{C}\mb{D}\mb{d}_{k-1}\Vert \nonumber \\
    & = \mb{E}[\Vert \mb{C} \mb{D} \mb{d}_{k-1} \Vert] + \Vert \mb{C} \mb{D} \mb{d}_{k-1}\Vert \\
    & \leq \mb{E}[\Vert \mb{C} \mb{D} \mb{d}_{k-1} \Vert] + d_\textup{max}\label{eq:E_on_ball}\\
    & = \int_{\mathcal{B}_{d_\textup{max}}(0)} \frac{1}{\pi d^2_\textup{max}} \Vert \mb{C} \mb{D} \mb{d}_{k-1} \Vert d\mb{d}_{k-1} + d_\textup{max}\\
    & = \int_{0}^{2\pi}\int_0^{d_\textup{max}}\frac{r^2}{\pi d^2_{\textup{max}}} dt d\theta + d_\textup{max}  = \frac{2}{3}d_\textup{max} + d_\textup{max} \\
    & = \frac{5}{3}d_\textup{max} \triangleq \delta. 
\end{align}
where we use the triangle inequality and then calculate the expectation in \eqref{eq:E_on_ball} exactly on the ball $\mathcal{B}_{d_\textup{max}}(0)\subset \R^2 $ using a coordinate transform to produce a less conservative bound. 
}

\revised{
Next, to bound $\sigma^2$ consider some constant vector $\mb{a} \in \R^2$ and random vector $\mb{b} \in \R^2$ that is uniformly distributed on the ball of radius $\mb{d}_\textup{max}$, i.e. $\mathcal{B}_{d_{\textup{max}}}(0)$. 
}

\revised{
First, we will lower-bound $\E[\Vert \mb{a} - \mb{b}\Vert]^2$ using Jensen's inequality given the convexity of the 2-norm: 
\begin{align}
    &\E [\Vert \mb{a} - \mb{b} \Vert ]
     \geq \Vert \mb{a} - \E[\mb{b}] \Vert = \Vert \mb{a} \Vert  \geq 0   \\
    &\implies \E[\Vert \mb{a} - \mb{b}\Vert]^2  \geq \Vert \mb{a} \Vert^2
\end{align}
}
\revised{
Next, we will bound $\E[\Vert \mb{a} - \mb{b}\Vert^2] $ by using the definition of the 2-norm squared and the linearity of the expectation, 
\begin{align}
    &\E[\Vert \mb{a} - \mb{b}\Vert^2 ] = \E[(\mb{a} - \mb{b})^\top (\mb{a} - \mb{b}) ] \\
    & = \E[ \mb{a}^\top \mb{a} - 2 \mb{a}^\top \mb{b} + \mb{b}^\top \mb{b}]\\
     & = \Vert \mb{a} \Vert^2 - 2 \mb{a}^\top \E[\mb{b}] + \E [ \Vert \mb{b} \Vert^2 ] = \Vert \mb{a} \Vert^2 + \E[\Vert \mb{b} \Vert^2]
\end{align}
We then use these two bounds, along with a coordinate transform, to calculate the variance. 
\begin{align}
    \textup{Var}&(\Vert \mb{a} - \mb{b} \Vert)  = \E[\Vert \mb{a} - \mb{b} \Vert^2] - \E[\Vert \mb{a} - \mb{b} \Vert]^2\\
    & \leq \E[\Vert \mb{a} - \mb{b}\Vert^2 ] - \Vert \mb{a} \Vert^2 =  \Vert \mb{a}\Vert^2  + \E[\Vert \mb{b}\Vert^2]  - \Vert \mb{a} \Vert^2 \nonumber \\
    & = \int_0^{2\pi}\int_{0}^{d_\textup{max}} \frac{r^3}{\pi d_\textup{max}^2} drd\theta = \frac{2\pi d_\textup{max}^4}{4 \pi d_\textup{max}^2} = \frac{1}{2}d_{\textup{max}}^2 \triangleq \sigma^2\nonumber 
\end{align}
}

\revised{
To find the value for $\sigma^2$ define $\mb{a} \triangleq \mb{C}(\mb{A}\mb{x}_{k-1} + \mb{B}(\mb{u}_{k-1}) - \bs{\rho}) $ and $\mb{b}  \triangleq \mb{C}\mb{D}\mb{d}_{k-1}$ where $\mb{d}_{k-1} \sim \mathcal{D}$ and note that variance is translationally invariant allowing us to reintroduce $r$ and set $\sigma^2 = \frac{1}{2}d_\textup{max}^2$. 
}

\revised{
Thus, given the structure of the example problem in Section \ref{example:hlip} we have found values $\delta = \frac{5}{3}d_\textup{max}$ and $\sigma^2 = \frac{d_\textup{max}^2}{2}$ which satisfy the conditions of Thm. \ref{thm:main}.
}

\ifthenelse{\boolean{long}}{
    }{
        \vspace{-0.5em}
    }
    
}{

    Proof of Proposition \ref{thm:comparison}:
\ifthenelse{\boolean{long}}{
    }{
        \vspace{-0.5em}
    }

}





\ifthenelse{\boolean{long}}{
\selectfont
}{
\selectfont
\vspace{-1.5em}
}
 
\bibliography{cosner}
\bibliographystyle{ieeetr}

\end{document}